\documentclass[submission]{eptcs}

\usepackage[T1]{fontenc}
\usepackage{mathpazo}
\usepackage{times}
\usepackage{pta}

\title{Improved Undecidability Results for Reachability Games on Recursive Timed Automata} 
\author{
  Shankara Narayanan Krishna \quad Lakshmi Manasa \quad Ashutosh Trivedi
  \institute{Indian Institute of Technology Bombay, Mumbai, INDIA}
  \email{\{krishnas, manasa, trivedi\}@cse.iitb.ac.in}
}

\begin{document}
\maketitle

\begin{abstract}
  {\bf Abstract}. We study reachability games on recursive timed automata (RTA)
  that generalize Alur-Dill timed automata with recursive procedure invocation
  mechanism similar to recursive state machines. 
  It is known that deciding the winner in reachability games on RTA is
  undecidable for automata with two or more clocks, while the 
  problem is decidable for automata with only one clock.
  Ouaknine and Worrell  recently proposed a time-bounded theory of real-time
  verification by claiming that restriction to bounded-time recovers
  decidability for several key decision problem related to real-time
  verification. 
  We revisited games on recursive timed automata with time-bounded
  restriction in the hope of recovering decidability. 
  However, we found that the problem still remains undecidable for  recursive
  timed automata with three or more clocks.   
  Using similar proof techniques we characterize a decidability frontier
  for a generalization of RTA to recursive stopwatch
  automata.  
\end{abstract}

\section{Introduction}
Timed automata, introduced by Alur and Dill~\cite{AD94}, extend finite state
machines with a finite set of continuous variables called clocks that grow with
uniform rate in each state. 
Since the syntax of timed automata allows guarding the transitions and states with
simple constraints on clocks and resetting clocks during a transition, timed
automata can express complex timing based properties of real-time systems. 
The seminal paper of timed automata~\cite{AD90} showed the decidability of the
fundamental reachability problem for the timed automata, 
that paved
the way for the success of timed automata as the specification and verification
formalism for real-time systems.  

\emph{Recursive timed automata} (RTAs)~\cite{TW10} extend timed automata with
recursion to model real-time software systems. 
Formally, an RTA is a finite collection of components where each component is a
timed automaton that in addition to making transitions between various states,
can have transitions to ``boxes'' that are mapped to other components modeling a
potentially recursive call to a component. 
During such invocation a limited information can be  passed through clock values 
from the ``caller'' component to the ``called'' component via two different
mechanism: a) \emph{pass-by-value}, where upon returning from the called
component a clock assumes the value prior to the invocation, and b)
\emph{pass-by-reference}, where upon return a clock reflects any changes to the
value inside the invoked procedure. 
The reachability problem for RTA is known~\cite{TW10} to be
undecidable for RTA with three or more clocks. 

In this paper we study reachability games on recursive timed automata that are
played between two players---called \ach and \tort---who take turns to move a
token along the infinite graph of configurations (context, states, and clock
valuations) of recursive timed automata. 
In a reachability game the goal of Player~1 is to reach a desirable set of
target states, while the goal of Player~2 is to avoid it.  
The reachability game problem is to decide the winner in a reachability game. 
\begin{figure}[t]
  \centering
  \makebox[2cm]{{\small
  \begin{tikzpicture}[node distance=4cm]

    \draw(-2, -1) rectangle (2.5,1);
    \draw (2.3, 1.2) node {$M_1$};
    
    \node[loc](u1) at (-2, 0.5) {$u_1$};
    \node[loc](u2) at (-2, -0.5) {$u_2$}; 

    \node[loc](u3) at (2.5, 0) {$u_3$}; 
    
    \node[boxloc](b1) at (0.5, 0) {$\:\;~b_1:M_2~\:\;$};
    \node[varpass] () [below of =b1, node distance=5mm]{$(x)$};

    \node[port](b1v1) at (-0.4, 0) {}; 
    \node[port](b1v2) at (1.4, 0) {}; 

    \draw[trans] (u1)--(b1v1)  node [midway, above]{$x {=} 1$};
    \draw[trans] (u2)--(b1v1)  node [midway, below]{$x {<} 1$};
    \draw[trans] (b1v2)--(u3) node [midway, above]{$x {=} 0$};
    
    \draw(3.5, -1) rectangle (7.5,1);
    \draw (7.3, 1.2) node {$M_2$};
    
    \node[oloc](v1) at (3.5, 0) {$v_1$};
    \node[loc](v2) at (7.5, 0) {$v_2$};
    
    \node[boxloc](c1) at (5.5, 0.5) {$~b_2:M_2~$};
    \node[port](c1v1) at (4.8, 0.5) {}; 
    \node[port](c1v2) at (6.2, 0.5) {}; 
    
    \draw[otrans] (v1)-- +(0.2, 0.5) --(c1v1) node [midway, above]{$x {=} 1$} ;
    \draw[trans] (c1v2)-- +(0.7, 0) -- (v2);
    \draw[otrans] (v1)-- + (0.2, -0.5) -- +(3.5, -0.5) node [midway,
    above]{$x {=} 1, \set{x}$} -- (v2);
    
  \end{tikzpicture}
}}
  \caption{Reachability game on a recursive timed automata with
    one clock and two components} 
  \label{fig:example2}
\end{figure}
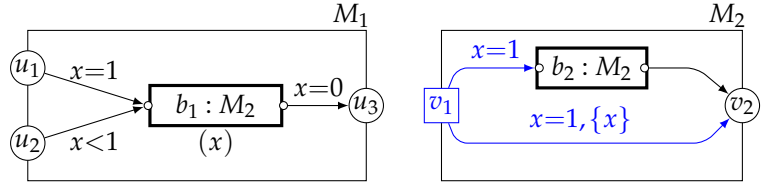
\begin{example}
  The visual presentation of a reachability game on recursive timed automaton with
  two components $M_1$ and $M_2$, and one clock variable $x$
  is shown in Figure~\ref{fig:example2} (inspired by example in \cite{TW10}) where  component $M_1$ calls component
  $M_2$ via box $b_1$ and component $M_2$ recursively calls itself via box $b_2$.
  Components are shown as thinly framed rectangles with their names 
  written next to upper right corner.
  Various control states, or ``nodes'', of the components are shown as circles
  or blue squares with their labels written inside them, e.g. see node $u_1$. 
  The circles are \ach states (see $u_1$) while blue squares ($v_1$) are  \tort states. 
  Entry nodes of a component appear on the left of the component (see $u_1$),
  while exit nodes appear on the right (see $u_3$). 
  Boxes are shown as thickly framed rectangles inside components
  labelled $b:M$, where $b$ is the label of the box, $M$ is the
  component it is mapped to.
  The tuple of clocks passed to $M$ by value, if any, are shown below the box,
  and the rest of the variables are passed by reference. 
  For example, in the figure clock $x$ is passed during the invocation of
  component $M_2$ via box $b_1$, while no clock is passed by value to component
  $M_2$ via box $b_1$. 
  Each transition is labelled with a guard and the set of reset variables,
  (e.g. transition from node $v_1$ to $v_2$ can be taken only
  when variable $x{<}1$, and after taking this transition, variable $x$ is
  reset). 
  To minimize clutter we omit empty reset sets.
\end{example}

Trivedi and Wojtczak~\cite{TW10} showed that the reachability game and termination
(reachability with empty calling context) game problems are undecidable for RTAs with
two or more clocks. 
Moreover, they considered the so-called glitch-free restriction of RTAs---where at
each invocation either all clocks are passed by value or all clocks are passed
by reference--- and showed that the reachability (and termination) is
EXPTIME-complete for RTAs with two or more clocks. 
In the model of~\cite{TW10} it is compulsory to pass all the clocks at every
invocation with either mechanism. 
Abdulla, Atig, and Stenman~\cite{AAS12} studied a related model called
timed pushdown automata where they disallowed passing clocks by value. 
On the other hand, they allowed clocks to be passed either by reference or not
passed at all (in that case they are stored in the call context and continue to
tick with the uniform rate). 
It is shown in~\cite{AAS12} that the reachability problem for this class remains
decidable (EXPTIME-complete).
In this article, we restrict ourselves to the recursive timed automata model as
introduced in~\cite{TW10}.

Ouaknine and Worrell~\cite{OW10} proposed a thesis that restriction to
bounded-time recovers decidability for several key decision problem related to
real-time verification. 
In support of this thesis a number of important undecidable problems have been
shown to be decidable under bounded-time restriction, for instance language
inclusion for timed automata, emptiness problem for alternating timed automata,
and emptiness problem for rectangular hybrid automata.
The goal of this work was to approach reachability games on recursive timed
automata from this viewpoint and to recover the decidability of these games
under time-bounded restriction. 
However, we discovered a rather negative result. 
In this paper we show that the problem stays undecidable for RTA with just $3$
or more clocks.  

We also consider the extension of RTAs with stopwatches (clocks that can be
paused) to recursive stopwatch automata (RSAs) and show that the  time-bounded
reachability game problem stays undecidable even for RSAs with $3$
or more stopwatches,  while we show decidability of glitch-free RSAs with $2$
stopwatches.   
We also show that the reachability problem is undecidable for unrestricted RSA
with two or more stopwatches. 
For the time-bounded reachability case, we show that the problem stays
undecidable even for glitch-free variant of RSAs with $4$ or more stopwatches.   
The Table~\ref{tab:res} highlights our contributions related to reachability games
on recursive timed and stopwatch automata. 
For a survey of models related to RTA and dense-time
pushdown automata we refer the reader to~\cite{TW10} and~\cite{AAS12}.
\begin{table}[t]
  \begin{tabular}{lcccc}
    \hline
    &\multicolumn{2}{c}{~~~~~~~Recursive Timed Automata~~~~~~~~~~}&\multicolumn{2}{c}{
      ~~~~~~~~~Recursive Stopwatch Automata~~~~~~~~}\\
    &TUB& TB & TUB & TB\\
    \toprule
    Glitch-free&  {\color{gray} D} &  {\color{gray} D} &  \textbf{U} ($\geq 3$ sw)  &{\textbf U}($\geq 4$ sw)\\
    &       &     & \textbf{D} $(\leq 2$ sw)& \textbf{D} $(\leq 2$ sw)\\
    \hline
    Unrestricted &  {\color{gray} U ($\geq 2$ clocks)} &  \textbf{U} ($\geq 3$ clocks)&\textbf{U} $(\geq 2$ sw) &\textbf{U} ($\geq 3$ sw) \\
    \hline
  \end{tabular}
  \caption{Summary of the contributions of this paper. Results shown in bold are
    contributions from this paper, while results shown in gray color are
    from~\cite{TW10}.  
    Here TUB  and TB stand for time-unbounded and time-bounded reachability games, and $\text{U}$
    stands for undecidable, $\text{D}$ for decidable, and $\text{sw}$ for
    stopwatches.} 
\vspace{-1em}
\label{tab:res}
\end{table}
 Due to space limitations, we only sketch the key proofs and details can be found in \cite{KMT14}.

\section{Preliminaries}
\subsection{Reachability Games on Labelled Transition Systems.}
A \emph{labelled transition system} (LTS) is a tuple $\lts = (S, A, \trans)$
where $S$ is the set of \emph{states}, 
$A$ is the set of \emph{actions}, and $\trans : S {\times} A \to S$ is
the \emph{transition function}.
We say that an LTS $\lts$ is \emph{finite} (\emph{discrete}) if both $S$
and $A$ are finite (countable).
We write $A(s)$ for the set of actions available at $s \in S$, i.e., $A(s) =
\set{a \::\: \trans(s, a) \not = \emptyset}$. 
A \emph{game arena} $G$ is a tuple $(\lts, S_\Ach, S_\Tor)$, where $\lts = (S,
A, \trans)$ is an LTS, $S_\Ach \subseteq S$ is the set of states controlled by
player \ach, and $S_\Tor \subseteq S$ is the set of states controlled by $\tort$. 
Moreover, sets  $S_\Ach$ and $S_\Tor$ form a partition of the set
$S$.
In a \emph{reachability game} on $G$, rational players---\ach
and \tort---take turns to move a token along the states of $\lts$.
The decision to choose the successor state is made by the player
controlling the current state.
The objective of \ach is to eventually reach certain states, while the
objective of \tort is to avoid them forever.

We say that $(s, a, s') \in S {\times} A {\times} S$ is a transition of
$\lts$ if $s' = \trans(s, a)$ and a \emph{run} of $\lts$ is a sequence
$\seq{s_0, a_1, s_1, \ldots} \in S {{\times}} (A {{\times}} S)^*$
such that $(s_i, a_{i+1}, s_{i+1})$ is a transition of $\lts$ for all
$i \geq 0$.
We write $\RUNS^\lts$ ($\FRUNS^\lts$) for the sets of infinite (finite)
runs and $\RUNS^\lts(s)$ ($\FRUNS^\lts(s)$) for the
sets of infinite (finite) runs starting from state~$s$.
For a set $F \subseteq S$ and a run $r = \seq{s_0, a_1, \ldots}$ we
define $\STOP(F)(r) = \inf \set{ i\in \Nat \::\: s_i \in F}$.
Given a state $s \in S$ and a set of final states $F \subseteq S$ we
say that a final state is reachable from $s_0$ if there is a run $r \in
\RUNS^\lts(s_0)$ such that $\STOP(F)(r) < \infty$.
A strategy of \ach is a partial function
$\alpha :\FRUNS^\lts \to A$ such that for a run $r \in \FRUNS^\lts$ we
have that $\alpha(r)$ is defined if $\LAST(r) \in S_\Ach$, and
$\alpha(r) \in A(\LAST(r))$ for every such $r$.
A strategy of \tort is defined analogously.
Let $\LSigma_\Ach$ and $\LSigma_\Tor$ be the set of strategies of
\ach and \tort, respectively.
The unique run $\RUN(s, \alpha, \tau)$ from a state $s$ when players use
strategies $\alpha \in \LSigma_\Ach$ and $\tau \in \LSigma_\Tor$ is
defined in a straightforward manner.

Given an initial state $s$ and a set of final states $F$, 
and strategies $\tau$ for \tort and $\alpha$ for \ach, 
\ach is said to win the reachability game 
if $\STOP(F)(\RUN(s, \alpha, \tau)) < \infty$,
else if $\STOP(F)(\RUN(s, \alpha, \tau)) = \infty$, then \tort 
is the winner. 
A \emph{reachability game problem} is to decide whether in a given
game arena $G$, an initial state $s$ and a set of final states $F$,
\ach has a strategy to win the reachability game (irrespective of 
\tort's strategy).
\subsection{Reachability Games on Recursive state machines}
A \emph{recursive state machine}~\cite{ABEGRY05} $\Mm$ is a tuple 
$(\Mm_1, \Mm_2, \ldots, \Mm_k)$ of components, where each component $\Mm_i =
(N_i, \En_i, \Ex_i, B_i, Y_i, A_i, \trans_i)$ for each $1 \leq i \leq k$  is
such that:
\begin{itemize}
\item
  $N_i$ is a finite set of \emph{nodes} including a distinguished set $\En_i$ of
  \emph{entry nodes} and a set $\Ex_i$ of \emph{exit nodes} such that $\Ex_i$
  and $\En_i$ are disjoint sets; 
\item
  $B_i$ is a finite set of \emph{boxes};
\item
  $Y_i: B_i \to \set{1, 2, \ldots, k}$ is a mapping 
  that assigns every box to a component.
  We associate a set of \emph{call ports} $\call(b)$ and return ports
  $\return(b)$ to each box $b \in B_i$:
  \begin{eqnarray*}
  \call(b) = \set{(b, en) \::\: en \in \En_{Y_i(b)} } &\text{ and } &
  \return(b) = \set{(b, ex) \::\: ex \in  \Ex_{Y_i(b)}}.
  \end{eqnarray*}
  Let $\call_i = \cup_{b \in B_i} \call(b)$ and
    $\return_i = \cup_{b \in B_i} \return(b)$ be the set of call and
    return ports of component $\Mm_i$.
    We define the set of locations $Q_i$  of component $\Mm_i$ as the 
    union of the set of nodes, call ports and return ports, i.e. 
    $Q_i = N_i  \cup \call_i \cup \return_i$;
  \item
    $A_i$ is a finite set of \emph{actions}; and
  \item
    $\trans_i : Q_i {{\times}} A_i \to Q_i$ is the transition function  with a
    condition that call ports and exit nodes do not have any outgoing transitions.
  \end{itemize}
For the sake of simplicity, we assume that the set of boxes $B_1,
\ldots, B_k$ and set of nodes $N_1, N_2, \ldots, N_k$ are mutually
disjoint.
We use symbols $N, B, A, Q, \trans$, etc. to denote
the union of the corresponding symbols over all components.

\begin{figure}[t]
  \centering
 \vspace*{-0.4cm}
  \makebox[2cm]{{\small
  \begin{tikzpicture}[node distance=4cm]

    \draw(-2, -1) rectangle (2,1);
    \draw (1.8, 1.2) node {$M_1$};

    \node[loc](u1) at (-2, 0.5) {$u_1$};
    \node[loc](u2) at (-2, -0.5) {$u_2$};

    \node[loc](u4) at (2, 0) {$u_4$};

    \node[boxloc](b1) at (-0.4, 0.5) {$~b_1:M_2~$};
    \node[port](b1v1) at (-1.15, 0.6) {};
    \node[port](b1v2) at (-1.15, 0.4) {};
    \node[port](b1v3) at (0.31, 0.6) {};
    \node[port](b1v4) at (0.31, 0.4) {};

    \node[boxloc](b2) at (-0.4, -0.5) {$~b_2:M_3~$};
    \node[port](b2w1) at (0.35, -0.5) {};
    \node[port](b2w2) at (-1.15, -0.5) {};

    \node[loc](u3) at (1.15, -0.5) {$u_3$};

    \draw[trans] (u1)--(b1v1);
    \draw[trans] (u2)--(b2w2);
    \draw[trans] (u3)--(u4);
    \draw[trans] (b2w1)--(u3);
    \draw[trans] (b1v3)--(u4);
    \draw[trans] (b1v4)-- +(0.25, 0) -- +(0.25, -0.4) --+(-1.7, -0.4) --
    +(-1.7, 0) -- +(-1.5, 0);

    \draw(3, -1) rectangle (6,1);
    \draw (5.8, 1.2) node {$M_2$};

    \node[loc](v1) at (3, 0.5) {$v_1$};
    \node[loc](v2) at (3, -0.5) {$v_2$};
    \node[loc](v3) at (6, 0.5) {$v_3$};
    \node[loc](v4) at (6, -0.5) {$v_4$};

    \node[boxloc](c1) at (4.5, 0.5) {$~c_1:M_2~$};
    \node[port](c1v1) at (3.75, 0.6) {};
    \node[port](c1v2) at (3.75, 0.4) {};
    \node[port](c1v3) at (5.19, 0.6) {};
    \node[port](c1v4) at (5.19, 0.4) {};

    \node[boxloc](c2) at (4.5, -0.5) {$~c_2:M_3~$};
    \node[port](c2w1) at (3.8, -0.5) {};
    \node[port](c2w2) at (5.19, -0.5) {};

    \draw[trans] (v1)--(c1v1);
    \draw[trans] (v2)-- + (0.2, 0.6) -- (c1v2);
    \draw[trans] (v2) -- (c2w1);
    \draw[trans] (c1v3) -- +(0.2, 0)--(v4);
    \draw[trans] (c1v4)--(v3);
    \draw[trans] (c2w2) --(v4);
    \draw[trans] (c2w2)-- +(0.2, 0.2) -- +(0.2, 0.5) -- +(-1.6, 0.5) --(c1v2);


    \draw(7, -1) rectangle (9.5,1);
    \draw (9.3, 1.2) node {$M_3$};

    \node[loc](w1) at (7, 0) {$w_1$};
    \node[loc](w2) at (9.5, 0) {$w_2$};

    \node[boxloc](d) at (8.3, 0.5) {$~d:M_1~$};
    \node[port](du1) at (7.65, 0.6) {};
    \node[port](du2) at (7.65, 0.4) {};
    \node[port](du4) at (8.95, 0.5) {};

    \draw[trans] (w1) .. controls +(60:0.5) and +(180:0.8) .. (du1);
    \draw[trans] (du4) -- (w2);
    \draw[trans] (w1) edge[bend right=25] (w2);

  \end{tikzpicture}
}}
  \vspace*{-0.4cm}
  \caption{Example recursive state machine taken from~\cite{ABEGRY05}}
  \vspace*{-0.4cm}
 \label{fig:example}
\end{figure}
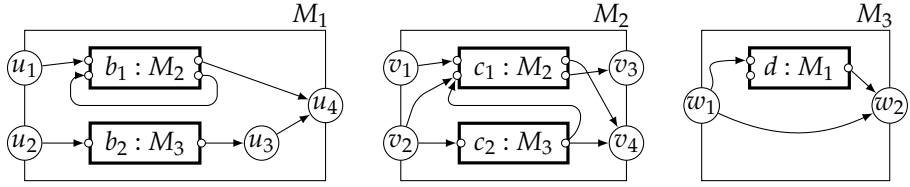
An example of a RSM is shown in Figure~\ref{fig:example} (taken from \cite{TW10}).
An execution of a RSM begins at the entry node of some component and depending
upon the sequence of input actions the state evolves naturally like a labelled
transition system. 
However, when the execution reaches an entry port of a box, this box is stored
 on a stack of pending calls, and the execution continues naturally from the
 corresponding entry node of the component mapped to that box. 
 When an exit node of a component is encountered, and if the stack of pending
 calls is empty then the run terminates; otherwise, it pops the box from the top
 of the stack and jumps to the exit port of the just popped box corresponding
 to the just reached exit of the component. 
We formalize the semantics of a RSM using a discrete LTS, whose states are pairs
consisting of a sequence of boxes, called the context, mimicking the stack of
pending calls and the current location.
Let $\Mm = (\Mm_1, \Mm_2, \ldots, \Mm_k)$ be an RSM where the component
$\Mm_i$ is $(N_i, En_i, Ex_i, B_i, Y_i, A_i, \trans_i)$.
The semantics of $\Mm$ is the discrete  labelled transition system
$\sem{\Mm} = (S_\Mm, A_\Mm, \trans_\Mm)$ where:
\begin{itemize}
\item
  $S_\Mm \subseteq B^* {\times} Q$ is the set of states;
\item
  $A_\Mm = \cup_{i=1}^{k} A_i$ is the set of actions;
\item
  $\trans_\Mm : S_\Mm {\times} A_\Mm \to S_\Mm$ is the transition
  function such that for $s = (\sseq{\kappa}, q) \in S_\Mm$ and
  $a \in A_\Mm$, we have that $s' =  \trans_\Mm(s, a)$ if and only
  if one of the following holds:
  \begin{enumerate}
  \item
    the location $q$ is a call port, i.e. $q = (b, en) \in \call$,
    and $s' = (\sseq{\kappa, b}, en)$;
  \item
    the location $q$ is an exit node, i.e. $q = ex \in \Ex$
    and $s' = (\sseq{\kappa'}, (b, ex))$ where
    $(b, ex) \in \return(b)$ and $\kappa = (\kappa', b)$;
  \item
    the location $q$ is any other kind of location, and
    $s' = (\sseq{\kappa}, q')$ and $q' \in \trans(q, a)$.
  \end{enumerate}
\end{itemize}

Given $\Mm$ and a subset $Q' \subseteq Q$ of its nodes we define 
$\sem{Q'}_\Mm$ as  $\set{(\sseq{\kappa}, v') \::\: \kappa \in B^* \text{ and }
  v' \in Q'}$.
We define the terminal configurations $\term_\Mm$ as
the set $\set{(\sseq{\varepsilon}, ex) \::\: ex \in \Ex}$ with the empty context 
$\sseq{\varepsilon}$.
Given a recursive state machine  $\Mm$, an initial node $v$, and a set
of \emph{final locations} $F \subseteq Q$ the {\em reachability problem}
on $\Mm$ is defined as the reachability problem on the LTS $\sem{\Mm}$ with
the initial state $(\sseq{\varepsilon}, v)$ and final states $\sem{F}$.
We define \emph{termination problem} as the reachability of one of the exits with the empty context.
The reachability and the termination problem for recursive state machines can
be solved in polynomial time~\cite{ABEGRY05}. 

A partition $(Q_\Ach, Q_\Tor)$ of locations $Q$ of an RSM $\Mm$
(between \ach and \tort) gives rise to recursive game arena
$G =(\Mm, Q_\Ach, Q_\Tor)$.
Given an initial state, $v$, and a set of final states, $F$, the
reachability game on $\Mm$ is defined as the reachability game on the game
arena $(\sem{\Mm}, \sem{Q_\Ach}_\Mm, \sem{Q_\Tor}_\Mm)$ with the initial state
$(\sseq{\varepsilon}, v)$ and the set of final states $\sem{F}_\Mm$.
Also, the termination game $\Mm$ is defined as the reachability game on the
game arena $(\sem{\Mm}, \sem{Q_\Ach}_\Mm, \sem{Q_\Tor}_\Mm)$ with the initial
state $(\sseq{\varepsilon}, v)$ and the set of final states $\term_\Mm$.
It is a well known result (see, e.g.~\cite{Wal96,Ete04}) that
reachability games and termination games on RSMs are decidable
(EXPTIME-complete).

\section{Recursive Hybrid Automata}
\label{sec:recursive-timed-automata}
In this paper since we study both recursive timed automata as well as recursive
stopwatch automata, we introduce a more general recurisve hybrid automata, and
from that we define these two subclasses.
Recursive hybrid automata (RHAs)  extend classical hybrid automata (HAs) with
recursion in a similar way RSMs extend LTSs.
We introduce a rather simpler subclass of hybrid automata known as singular
hybrid automata where all variables grow with constant-rates. 
\subsection{Syntax}
 Let $\Real$ be the set of real numbers.
 Let $\variables$ be a finite set of real-valued variables.
 A \emph{valuation} on $\variables$ is a function $\nu : \variables \to \Real$.
 We assume an arbitrary but fixed ordering on the variables and write $x_i$
 for the variable with order $i$. 
 This allows us to treat a valuation $\nu$ as a point $(\nu(x_1), \nu(x_2),
 \ldots, \nu(x_n)) \in \Real^{|\variables|}$. 
 Abusing notations slightly, we use a valuation on $\variables$ and a point in
 $\Real^{|\variables|}$ interchangeably. 
For a subset of variables $X \subseteq \variables$ and a valuation
$\nu' \in \variables$, we write $\nu[X{:=}\nu']$ for the valuation where
$\nu[X{:=}\nu'](x) = \nu'(x)$ if $x \in X$, and
$\nu[X{:=}\nu'](x) = \nu(x)$ otherwise.
The valuation $\zero \in \V$ is a special valuation such that
$\zero(x) = 0$ for all $x \in \variables$.

 We define a constraint over a set $\variables$ as a subset of $\Real^{|\variables|}$.
 We say that a constraint is \emph{rectangular} if it is defined as the conjunction
 of a finite set of constraints of the form 
 $x  \bowtie k,$  where $k \in \Int$, $x \in \variables$, and $\bowtie \in \{<,\leq, =,
 >, \geq\}$.    
 For a constraint $G$, we write $\sem{G}$ for the set of valuations in
 $\Real^{|\variables|}$ satisfying the constraint $G$.  
 We write $\top$ ( resp., $\bot$) for the special constraint that is true
 (resp., false) in all the valuations, i.e. $\sem{\top} = \Real^{|\variables|}$ 
(resp., $\sem{\bot} = \emptyset$). 
 We write $\rect(\variables)$ for the set of rectangular constraints over $\variables$ including
 $\top$ and $\bot$.  
\begin{definition}[Recursive Hybrid Automata]
  A \emph{recursive hybrid automaton} $\Hh = (\variables, (\Hh_1, \Hh_2, \ldots,
  \Hh_k))$ is a pair made of a set of variables $\variables$ and a collection of
  components $(\Hh_1, \Hh_2, \ldots, \Hh_k)$ where every component 
  $\Hh_i =  (N_i, \En_i, \Ex_i, B_i, Y_i, A_i, \trans_i, P_i, \inv_i,
  E_i, J_i, F_i)$ is such that:
  \begin{itemize}
  \item
    $N_i$ is a finite set of \emph{nodes} including a distinguished set $\En_i$ of
    \emph{entry nodes} and a set $\Ex_i$ of \emph{exit nodes} such that $\Ex_i$
    and $\En_i$ are disjoint sets; 
  \item
    $B_i$ is a finite set of \emph{boxes};
  \item
    $Y_i: B_i \to \set{1, 2, \ldots, k}$ is a mapping that assigns every box to
    a component. 
    (Call ports $\call(b)$ and return ports $\return(b)$ of a box $b \in B_i$,
    and call ports $\call_i$ and return ports $\return_i$ of a component $\Hh_i$
    are defined as before. 
    We set $Q_i = N_i \cup \call_i \cup \return_i$ and refer to this set as the
    set of locations of $\Hh_i$.) 
  \item
    $A_i$ is a finite set of \emph{actions}.
  \item
    $\trans_i : Q_i {{{\times}}} A_i \to Q_i$ is the transition function  with a
    condition that call ports and exit nodes do not have any outgoing transitions.
  \item
    $P_i: B_i \to 2^\variables$ is pass-by-value mapping that assigns every box
    the set of variables that are passed by value to the component
    mapped to the box; (The rest of the variables are assumed to be passed by
    reference.)
  \item
    $\inv_i : Q_i \to \rect(\variables)$ is the \emph{invariant condition};
  \item
    $E_i : Q_i {{\times}} A_i \to \rect(\variables)$ is the \emph{action enabledness function};  
  \item
    $J_i : A_i \to 2^{\variables}$ is the \emph{variable reset function}; and
  \item
    $F_i : Q_i \to \Nat^{|\variables|}$ is the \emph{flow function}
    characterizing the rate of each variable in each location.
  \end{itemize}
  We assume that the sets of boxes, nodes, locations, etc. are mutually
  disjoint across components and we write ($N, B, Y, Q, P, \trans$, etc.) to denote corresponding
  union over all components. 
\end{definition}
We say that a recursive hybrid automaton is \emph{glitch-free} if for
every box either all variables are passed by value or none is passed by
value, i.e. for each $b \in B$ we have that either $P(b) = \variables$
or $P(b) = \emptyset$. 
Any general recursive hybrid automaton with one variable is trivially
glitch-free.
We say that a RHA is \emph{hierarchical} if there exists an ordering over
components such that a component never invokes another component of higher
order or same order. 

We say that a variable $x \in \variables$ is a \emph{clock} (resp., a
stopwatch) if for every location $q \in Q$ we have that $F(q)(x) = 1$ (resp.,
$F(q)(x) \in \set{0, 1}$). 
A recursive timed automaton (RTA) is simply a recursive hybrid automata where all
variables $x \in \variables$ are clocks. 
Similarly, we define a recursive stopwatch automaton (RSA) as a recursive hybrid
automaton where all variables $x \in \variables$ are stopwatches. 
Since all of our results pertaining to recursive hybrid automata are shown in
the context of recursive stopwatch automata, we often confuse RHA with RSA. 

\subsection{Semantics}
A \emph{configuration} of an RHA $\Hh$ is a tuple $(\sseq{\kappa},
q, \val)$, where $\kappa \in (B {\times} \V)^*$ is 
sequence of pairs of boxes and variable valuations, $q \in Q$ is a
location and $\val \in \V$ is a variable valuation over $\variables$ such that
$\val \in \inv(q)$.
The sequence $\sseq{\kappa} \in (B {\times} \V)^*$ denotes the stack of
pending recursive calls and the valuation of all the variables at the
moment that call was made, and we refer to this sequence as the context of
the configuration.
Technically, it suffices to store the valuation of variables passed by
value, because other variables retain their value after returning from a
call to a box, but storing all of them simplifies the notation.
We denote the empty context by $\sseq{\epsilon}$.
For any $t \in \Real$, we let $(\sseq{\kappa}, q, \val){+}t$
equal the configuration $(\sseq{\kappa}, q, \val{+} F(q) \cdot t)$.
Informally, the behaviour of an RHA is as follows.
In configuration $(\sseq{\kappa}, q, \val)$ time passes before an available
action is triggered, after which a discrete transition occurs.
Time passage is available only if the invariant condition $\inv(q)$
is satisfied while time elapses, and an action $a$ can be chosen after
time $t$ elapses only if it is enabled after time elapse, i.e.,\
if $\val{+} F(q) \cdot t \in E(q, a)$.
If the action $a$ is chosen then the successor state is
$(\sseq{\kappa}, q', \val')$ where $q' \in X(q,a)$ and
$\val' = (\val+t)[J(a) := \mathbf{0}]$.
Formally, the semantics of an RHA is given by an LTS which has both an
uncountably infinite number of states and transitions.

\begin{definition}[RHA semantics]
  Let $\Hh = (\variables, (\Hh_1, \Hh_2, \ldots, \Hh_k))$ be an RHA where
  each component is of the form
  $\Hh_i =  (N_i, \En_i, \Ex_i, B_i, Y_i, A_i, \trans_i, P_i, \inv_i,
  E_i, J_i, F_i)$.
  The semantics of $\Hh$ is a labelled transition system
  $\sem{\Hh} = (\nS, \nA, \ntrans)$ where:
  \begin{itemize}
  \item
    $\nS \subseteq (B{\times} \V)^* {\times} Q\, {\times}\, \V$, the set of
    states, is s.t. 
    $(\sseq{\kappa}, q, \val) {\in} \nS$ if $\val {\in} \inv(q)$.
  \item
    $\nA = \Rplus {{\times}} A$ is the set of \emph{timed actions}, where $\Rplus$ is the set of non-negative reals;
  \item
    $\ntrans : \nS {\times} \nA \to \nS$ is the transition function
    such that for $(\sseq{\kappa}, q, \val) \in \nS$ and
    $(t, a) \in \nA$, we have
    $(\sseq{\kappa'}, q', \val') = \ntrans((\sseq{\kappa}, q, \val),
    (t, a))$ if and only if the following condition holds:
    \begin{enumerate}
    \item
      if the location $q$ is a call port, i.e. $q = (b, en) \in
      \call$ then $t = 0$, the context
      $\sseq{\kappa'} = \sseq{\kappa, (b, \val)}$,
      $q' = en$, and $\val' = \val$.
    \item
      if the location $q$ is an exit node, i.e. $q = ex \in Ex$,
      $\sseq{\kappa} = \sseq{\kappa'', (b, \val'')}$,
      and let $(b, ex) \in \return(b)$, then
      $t = 0$; $\sseq{\kappa'} = \sseq{\kappa''}$;
      $q' {=} (b, ex)$; and $\val' {=} \val[P(b){:=}\val'']$.
    \item
      if location $q$ is any other kind of location, then  $\sseq{\kappa'} =
      \sseq{\kappa}$,  $q' \in \trans(q,a)$, and
 \begin{enumerate} 
      \item $\val {+} F(q)\cdot t' \in \inv(q)$ for all $t' \in [0, t]$;
      \item $\val {+} F(q)\cdot t \in E(q, a)$; 
      \item $\val' = (\val + F(q){\cdot} t)[J(a):= \zero]$.
      \end{enumerate}
    \end{enumerate}
  \end{itemize}
\end{definition}

\subsection{Reachability and Time-Bounded Reachability Game Problems}
For a subset $Q' \subseteq Q$ of states of RHA $\Hh$ we
define the set $\sem{Q'}_\Hh$ as the set $\set{(\sseq{\kappa}, q, \val) \in S_\Hh
  \::\: q \in Q'}$. 
We define the terminal configurations as
${\term_\Hh = \set{(\sseq{\varepsilon}, q, \val) \in S_\Hh \::\:   q \in \Ex}}$.
Given a recursive hybrid automaton $\Hh$, an initial node $q$ and
valuation $\val \in \V$, and a set of \emph{final locations} $F \subseteq
Q$, the {\em reachability problem} on $\Hh$ is to decide the existence of a run
in the LTS $\sem{\Hh}$ staring from the initial state $(\sseq{\varepsilon}, q,
\val)$ to some state in $\sem{F}_\Hh$.
As with RSMs, we also define \emph{termination problem} as reachability of one
of the exits with the empty context. 
Hence, given an RHA $\Hh$ and an initial node $q$ and a valuation $\val
\in \V$, the termination problem on $\Hh$ is to decide the existence of a run in the
LTS $\sem{\Hh}$ from initial state $(\sseq{\varepsilon}, q, \val)$ to a final
state in  $\term_\Hh$.

Given a run $r = \seq{s_0, (t_1, a_1), s_2, (t_2, a_2), \ldots, (s_n, t_n)}$ of an
RHA, its time duration $\Time(r)$ is defined as $\sum_{i=1}^{n}t_i$. 
Given a recursive hybrid automaton $\Hh$, an initial node $q$, a bound $T \in
\Nat$, and valuation $\val \in \V$, and a set of \emph{final locations} $F \subseteq
Q$, the {\em time-bounded reachability problem} on $\Hh$ is to decide the
existence of a run $r$ in the LTS $\sem{\Hh}$ staring from the initial state
$(\sseq{\varepsilon}, q, \val)$ to some state in $\sem{F}_\Hh$ such that
$\Time(r) \leq T$. 
Time-bounded termination problem is defined in an analogous manner. 

A partition $(Q_\Ach, Q_\Tor)$ of locations $Q$ of an RHA $\Hh$
gives rise to a recursive hybrid game arena 
$\Gamma = (\Hh, Q_\Ach, Q_\Tor)$.
Given an initial location $q$, a valuation $\nu \in V$ and a set of final
states $F$, the reachability game on $\Gamma$ is defined as the
reachability game on the game arena $(\sem{\Hh}, \sem{Q_\Ach}_\Hh,
\sem{Q_\Tor}_\Hh)$ with the initial state
$(\sseq{\varepsilon}, (q, \nu))$ and the set of final states $\sem{F}_\Hh$.
Also, termination game on $\Gamma$ is defined as the reachability game on
the game arena $(\sem{\Hh}, \sem{Q_\Ach}_\Hh, \sem{Q_\Tor}_\Hh)$ with
the initial state $(\sseq{\varepsilon}, (q, \nu))$ and the set of final states
$\term_\Hh$.

We prove the following key theorem about reachability games on various
subclasses of recursive hybrid automata in Section \ref{undec}.
\begin{theorem}
\label{th:main}
The reachability game problem is undecidable for: 
\begin{enumerate}
\item 
 Unrestricted RSA with 2 stopwatches,
\item 
 Glitch-free RSA with 3 stopwatches, 
\item 
 Unrestricted RTA with 3 clocks under bounded time, and 
\item 
 Glitch-free RSA with 4 stopwatches under bounded time.
\end{enumerate}
Moreover, all of these results hold even under hierarchical restriction. 
\end{theorem}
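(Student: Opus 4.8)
The plan is to prove all four undecidability results by reduction from the halting problem for two-counter (Minsky) machines, following the general strategy of Trivedi and Wojtczak~\cite{TW10} but sharpening the clock/stopwatch budget. The central idea is to encode the counter values $c_1$ and $c_2$ of a Minsky machine in the \emph{fractional parts} of variable valuations, e.g. a configuration with counters $(c_1,c_2)$ is represented by a valuation where some variable has fractional part $1/2^{c_1}$ and another has fractional part $1/2^{c_2}$ (or an analogous power encoding). Each instruction of the machine---increment, decrement, and zero-test---becomes a ``gadget'' built from nodes, boxes and transitions; a recursive call onto a box is used to remember, on the stack, the value of a variable that must be restored (pass-by-value) or to implement the unbounded ``counting down'' needed to multiply/divide a fractional part by two. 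The game aspect enters because exact arithmetic on fractional parts cannot be enforced by a one-player automaton alone: \ach will be asked to ``propose'' the updated valuation by letting time elapse, and \tort is given the power to challenge the proposal by branching into a \emph{verification} gadget; if \ach cheated, \tort wins, and if \tort challenges an honest move, \ach wins. Thus \ach has a winning strategy in the reachability game if and only if the Minsky machine halts.

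The key steps, in order, are: (1) fix the fractional-part encoding and describe the ``copy/rescale'' module that, using one recursive call, transforms a variable with fractional part $f$ into one with fractional part $2f$ or $f/2$, with \tort's challenge branch checking the relation via a clock comparison of the form $x=1$ after a controlled delay; (2) assemble the increment and decrement gadgets from this module, using pass-by-value to push the ``old'' value onto the context so that the variable not being modified is returned unchanged upon exit---here glitch-freeness (all-or-nothing passing) forces us to carry along an extra stopwatch to park the untouched counter, which is exactly why item (2) of the theorem needs $3$ stopwatches rather than $2$; (3) build the zero-test gadget, which checks whether a fractional part equals $1$ (i.e. the variable is integral), again as a \tort-challengeable guard; (4) wire the gadgets together following the control flow of the machine, with the ``halt'' instruction leading to the target set $F$, and argue correctness of the reduction in both directions. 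For the stopwatch variants we additionally exploit that a stopwatch can be \emph{frozen} across a time-elapse step, which lets us dispense with bounded time (items 1,2) or, conversely, simulate a frozen variable inside a bounded-time budget by slowing it down (items 3,4); for the bounded-time RTA result (item 3) the encoding must be arranged so that the entire simulation of an unbounded run fits in total time $\le T$, which is achieved by making the $k$-th instruction consume time on the order of $2^{-k}$, a standard time-contraction trick. Hierarchical-ness is preserved because each gadget only ever calls ``lower'' components or a fixed rescaling component, so no genuine recursion cycle is needed beyond self-calls that can be stratified.

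The main obstacle I expect is item (1) combined with the tight variable budget: implementing the doubling/halving of a fractional part \emph{exactly}, while simultaneously leaving room for \tort's verification branch and for parking the idle counter, with only two or three variables. In the pure-reachability RTA setting two clocks already give undecidability~\cite{TW10}, but under bounded time the time-contraction forces an additional clock to act as a ``clock'' that measures the global budget / drives the geometric slowdown, pushing the bound to three; dually, for glitch-free RSA the all-or-nothing passing convention costs one variable, and the bounded-time slowdown costs another, which is the source of the jump from $3$ to $4$ in item (4). Getting the gadgets to share variables between the arithmetic, the challenge mechanism, and the bookkeeping---so that no more than the claimed number of variables is ever live---is the delicate engineering, and verifying that \tort cannot gain anything by a spurious challenge (respectively that \ach cannot cheat undetectably) in every gadget is where the bulk of the case analysis lies. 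The lower-bound direction (machine halts $\Rightarrow$ \ach wins) is routine once the gadgets are fixed; the converse (\ach wins $\Rightarrow$ machine halts) requires the standard argument that any winning \ach strategy must survive every \tort challenge and is therefore forced to play honest arithmetic at each step.
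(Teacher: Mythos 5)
Your overall architecture---reduction from two-counter machines, counters in fractional parts, \ach proposing delays and \tort optionally branching into verification gadgets, pass-by-value to protect idle variables, and a geometric time-contraction for the bounded-time cases---is the same as the paper's. But there is a concrete gap in how you propose to realize the bounded-time items (3) and (4), and it stems from your choice of encoding. You encode each counter separately as a fractional part $1/2^{c_i}$ and implement decrement by \emph{doubling}. In these constructions the time spent simulating one instruction is proportional to the current clock values (the rescaling gadget elapses a delay $t$ equal to the new value, and the verification branches elapse delays of the form $1-t$ repeated $n$ times), so if a counter oscillates the per-step cost stays bounded away from $0$ and a halting computation of length $N$ consumes time $\Omega(N)$. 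No uniform bound $T$ then exists, and the reduction to \emph{time-bounded} reachability fails. You do mention the ``$2^{-k}$ per step'' contraction, but your encoding does not deliver it: doubling a fractional part moves the values back up.

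The paper avoids this by never multiplying: both counters are folded into a single clock $x=\frac{1}{2^{k+c}3^{k+d}}$ (a $2^{a}3^{b}$ G\"odel-style denominator) together with a step clock $y=\frac{1}{2^{k}}$, where $k$ is the number of instructions executed so far. Increment of $c$ is division of $x$ by $12$ and of $y$ by $2$; decrement of $c$ is division of $x$ by $3$ and of $y$ by $2$ (which simultaneously advances $k$ and lowers $c$); zero-check divides $x$ by $6$. Every operation is a division, all values decrease by at least a factor of $2$ per step, the $k$-th instruction costs less than $2/2^{k}$, and the whole run fits in time $4$. The third clock $z$ is only an urgency clock (invariant $z=0$), not a budget-measuring clock as you suggest, and in the glitch-free stopwatch case the fourth variable $u$ is a scratch stopwatch whose freezing replaces selective pass-by-value. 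So the counting of variables in your items (2)--(4) comes out right, but to make the bounded-time reductions go through you must replace the per-counter $1/2^{c}$ encoding with one in which every instruction, including decrement, is realized by a division.
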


On a positive side, we observe that for glitch-free RSA with two stopwatches 
reachability games are decidable by exploiting the existence of finite bisimulation
for hybrid automata with 2 stopwatches. Details can be found in \cite{KMT14}.
\begin{theorem}
 \label{thm:dec}
 The reachability games are decidable for glitch-free RSA
 with atmost two stopwatches. 
\end{theorem}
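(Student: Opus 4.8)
The plan is to reduce the reachability game on a glitch-free RSA $\Hh$ with at most two stopwatches to a reachability game on a recursive state machine, which is decidable (EXPTIME-complete) by the results cited in the excerpt (Walukiewicz, Etessami). The key observation is that a glitch-free RSA with two stopwatches has, for each component and for the whole automaton, a \emph{finite bisimulation quotient} of the underlying (non-recursive) hybrid dynamics: the classical region construction for hybrid automata with two stopwatches yields finitely many equivalence classes of valuations that are stable under time elapse and discrete transitions. First I would make this precise: fix the largest constant $K$ appearing in any rectangular constraint of $\Hh$, and define a region equivalence $\approx$ on $\V = \Real^2$ refining the usual timed-automata regions but accounting for the fact that a stopwatch may be frozen in some locations. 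Because there are only two variables and they change rate only among $\set{0,1}$, the standard argument (see the two-stopwatch decidability literature) shows $\approx$ has finitely many classes and is a time-abstract bisimulation for the non-recursive semantics.

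Next I would lift this to the recursive setting. The subtlety unique to the recursive model is the pass-by-value/pass-by-reference mechanism and the stack of saved valuations in the context. Here glitch-freeness is essential: at every box either \emph{all} variables are passed by value or \emph{none} is. If all variables are passed by value, then upon return the valuation is restored to exactly what it was at the call port, so from the caller's viewpoint the entire subcall behaves like a finite-branching gadget whose only externally visible effect is ``which exit node was reached'' — the valuation is completely decoupled, and one can summarise the subcall by the set of exit nodes reachable (as a game) from a given entry region. If no variable is passed by value, then all variables are passed by reference and simply continue evolving; but then the saved valuation in the context is irrelevant on return (it is overwritten in the by-value case, and in the by-reference case $P(b)=\emptyset$ so $\val' = \val$). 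In both cases the context need only record the sequence of boxes together with a \emph{region} (not a precise valuation) for the by-value components, because two configurations whose contexts agree box-for-box and region-for-region are bisimilar. This bounds the relevant stack alphabet to a finite set, so the quotient of $\sem{\Hh}$ is (the semantics of) a genuine RSM $\Mm$, whose nodes are pairs (location, region) and whose boxes carry the region of the passed-by-value snapshot.

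Then I would verify that the region quotient respects the game structure: since the partition into $Q_\Ach$ and $Q_\Tor$ is by location only, every state in a region class belongs to the same player, so $\approx$ is a bisimulation \emph{of the game arena}, and winning regions transfer between $\sem{\Hh}$ and $\sem{\Mm}$. One must be slightly careful that in a timed game a player chooses a timed action $(t,a)$, whereas in the RSM the abstracted move is just a region-to-region edge labelled by ``delay into region $R'$ then fire $a$''; the standard time-abstract bisimulation argument for games (as used for one-clock RTA and for timed automata games) shows this abstraction is sound because any concrete delay realising the same region sequence leads to a bisimilar successor and the opponent's responses are matched classwise. Finally, the initial configuration $(\sseq{\varepsilon},q,\nu)$ maps to $(\sseq{\varepsilon},(q,[\nu]))$ and the target set $\sem{F}_\Hh$ maps to $\sem{F'}_\Mm$ with $F' = \set{(q,R) : q \in F}$, so Player~\ach wins the RHA reachability game iff she wins the corresponding RSM reachability game, which is decidable by the cited result; the EXPTIME bound follows since the number of regions is exponential only in the encoding of $K$ and the number of variables (here fixed at two).

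The main obstacle I anticipate is the bookkeeping around the context: making fully rigorous the claim that two configurations with box-wise region-equal contexts are bisimilar, in particular getting the \emph{return} transition right when variables are passed by value — one has to check that restoring ``some valuation $\val''$ in region $R$'' instead of the exact $\val''$ still yields a $\approx$-equivalent post-return configuration for the combined two-variable valuation, which requires that $\approx$ be a congruence for the reset-to-stored operation $\val[P(b){:=}\val'']$ when $P(b)=\variables$ (trivial, since then $\val' = \val''$) or $P(b)=\emptyset$ (trivial, since then $\val' = \val$). Because glitch-freeness collapses exactly into these two trivial cases, the obstacle is really just writing this out carefully rather than a genuine mathematical difficulty; the genuine content is the finiteness of the two-stopwatch region quotient, which we import from the literature.
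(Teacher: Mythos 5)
Your proposal follows essentially the same route as the paper, which proves this theorem by exploiting the existence of a finite (time-abstract) bisimulation for hybrid automata with two stopwatches and using glitch-freeness to reduce the game to a reachability game on a recursive state machine with a finite stack alphabet of (box, region) pairs; the paper only sketches this in one sentence and defers details to the full version, and your elaboration of the context/return bookkeeping is consistent with that sketch. No substantive divergence or gap relative to the paper's argument.
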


\section{Undecidability Results}
\label{undec}
 
In this section, we provide a proof sketch of our undecidability results by
reducing the halting problem for two counter machines  to the reachability
problem in an RHA/RTA. 
Before we show our reduction, we give a formal definition of two-counter
machines. 

\begin{definition}[Two-counter Machines]
  A two-counter machine is a tuple $(L, C)$ where ${L = \set{\ell_0,
      \ell_1, \ldots, \ell_n}}$ is the set of instructions---including a
  distinguished terminal instruction $\ell_n$ called HALT---and ${C =
    \set{c_1, c_2}}$ is the set of two \emph{counters}.  
  The instructions $L$ are of the type:
  \begin{enumerate}
  \item (increment $c$) $\ell_i : c := c+1$;  goto  $\ell_k$,
  \item (decrement $c$) $\ell_i : c := c-1$;  goto  $\ell_k$,
  \item (zero-check $c$) $\ell_i$ : if $(c >0)$ then goto $\ell_k$
    else goto $\ell_m$,
  \item (Halt) $\ell_n:$ HALT.
  \end{enumerate}
  where $c \in C$, $\ell_i, \ell_k, \ell_m \in L$.
\end{definition}
A configuration of a two-counter machine is a tuple $(l, c, d)$ where
$l \in L$ is an instruction, and $c, d$ are natural numbers that specify the value
of counters $c_1$ and $c_2$, respectively.
The initial configuration is $(\ell_0, 0, 0)$.
A run of a two-counter machine is a (finite or infinite) sequence of
configurations $\seq{k_0, k_1, \ldots}$ where $k_0$ is the initial
configuration, and the relation between subsequent configurations is
governed by transitions between respective instructions.
The run is a finite sequence if and only if the last configuration is
the terminal instruction $\ell_n$.
Note that a two-counter  machine has exactly one run starting from the initial
configuration. 
The \emph{halting problem} for a two-counter machine asks whether 
its unique run ends at the terminal instruction $\ell_n$.
The halting problem~\cite{Min67} for two-counter machines is undecidable.

For all the undecidability results, we construct a recursive automaton
(timed/hybrid) as per the case, whose main components are the modules for the
instructions and  the counters are encoded in the variables of the automaton.  
In these reductions,  the reachability of the exit node of each component
corresponding to an instruction is linked to a faithful simulation of various
increment, decrement and zero check  instructions of the machine by choosing
appropriate delays to adjust the clocks/variables, to reflect changes  in
counter values. 
We specify a main component for each type instruction of the two counter
machine, for example $\Hh_{inc}$ for increment.  
The entry node and exit node of a main component $\Hh_{inc}$ corresponding to an 
instruction [$\ell_i:  c := c +1$;  goto  $\ell_k$] 
are respectively $\ell_i$ and $\ell_k$.  
 Similarly, a main component corresponding to a zero check instruction 
[$l_i$: if $(c >0)$ then goto $\ell_k$]
  else goto $\ell_m$, has a unique entry node  $\ell_i$, and two exit nodes corresponding to $\ell_k$ and 
  $\ell_m$ respectively. 
The various main components corresponding to the various instructions, when
connected appropriately, gives the higher level component $\Hh_{M}$ and this
completes the RHA $\Hh$.  
The entry node of $\Hh_{M}$ is the entry node of the main component for the
first instruction of $M$ and the exit node is $HALT$. 
\ach simulates the machine while \tort verifies the simulation. Suppose in each
main component for each type of instruction correctly \ach simulates the
instruction by accurately updating the counters encoded in the variables of
$\Hh$. Then, the unique run in $M$ corresponds to an unique run in
$\Hh_{M}$. The halting  problem of the two counter machine now boils down to
existence of a \ach strategy to ensure the reachability of an exit node $HALT$
(and $\ddot \smile$) in $\Hh_{M}$.\footnote{Hence the set of final nodes is node $HALT$ and all nodes labelled $\ddot \smile$.}

For the correctness proofs, we represent runs in the RSA
using three different forms of transitions
$s \stackrel[t]{g,J}{\longrightarrow} s'$, $s  \rightsquigarrow s'$ and $s \stackrel[M(V)]{*}{\longrightarrow} s'$ 
defined in the following way:
\begin{enumerate}
\item The transitions of the form $s \stackrel[t]{g,J}{\longrightarrow} s'$, where $s=(\langle \kappa \rangle, n, \nu)$, 
  $s'=(\langle \kappa \rangle, n', \nu')$ are configurations of the RHA, $g$ is a constraint or guard 
  on variables that enables the transition,
  $J$ is a set of variables, and $t$ is a real number, holds if there is a transition in the RHA from vertex $n$ to $n'$ with guard $g$ and reset 
  set  $J$. Also, $\nu'=\nu+rt[J:=0]$, where $r$ is the rate vector 
  of state $s$.
\item The transitions of the form $s \rightsquigarrow s'$ where $s=(\langle \kappa \rangle,n,\nu)$,
  $s'=(\langle \kappa' \rangle, n', \nu')$ correspond to the following cases:
  \begin{itemize}
  \item transitions from a call port to an entry node. That is, $n=(b,en)$ for some box
  $b \in B$ and $\kappa' =\langle \kappa, (b, \nu)\rangle$ and $n'=en \in \En$ while $\nu'=\nu$. 
  \item transitions from an exit node to a return port which restores values of the variables passed by value, that is, $\langle \kappa \rangle=\langle \kappa'',(b, \nu'')\rangle$, $n=ex \in \Ex$ and $n'=(b,ex) \in Ret(b)$ and $\kappa'=\kappa''$, while $\nu'=\nu[P(b):=\nu'']$.
  \end{itemize}
  \item The transitions of the form $s \stackrel[M(V)]{t}{\longrightarrow} s'$, called summary edges, where 
  $s=(\langle \kappa \rangle, n, \nu)$, $s'=(\langle \kappa \rangle, n', \nu')$ are such that $n=(b,en)$ and $n'=(b,ex)$ are call and return ports, respectively, of a box $b$ mapped
to $M$ which passes by value to $M$, the variables in $V$. $t$ is the time elapsed between the occurences of $(b,en)$ and $(b,ex)$. In other words, $t$ is the time elapsed in the component $M$.
    \end{enumerate}
  A configuration $(\langle \kappa \rangle, n, \nu)$ is also written as $(\langle \kappa \rangle, n, (\nu(x),\nu(y)))$.  
  
\subsection{Time Bounded Reachability Games in Unrestricted RTA}
\label{undec:rtg_3clk}

\begin{lemma}
\label{3clk}
The time bounded reachability game problem is undecidable for recursive timed automata 
  with at least 3 clocks.
\end{lemma}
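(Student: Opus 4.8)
The plan is to reduce the halting problem for two-counter machines to the time-bounded reachability game on a recursive timed automaton with three clocks, following the general template described in the preamble to this section: Player~\ach simulates the run of the machine by choosing delays, Player~\tort is given the power to challenge any step, and a faithful simulation corresponds exactly to \ach winning. The first design decision is the encoding of counters. Since we are in the time-bounded setting, the total elapsed time is at most $T$, so we cannot let a clock simply accumulate the counter value; instead I will encode the pair of counter values $(c,d)$ in a single clock (or two clocks) using the classical ``fractional'' or ``$1/2^c$''-style encoding that keeps all clock values inside a bounded region — e.g. represent counter value $c$ by a clock reading of the form $1 - 1/2^{c}$ (or a similar bounded code), so that increments and decrements become halving/doubling operations on the gap to $1$. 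We have three clocks available: two to hold the encodings of $c_1$ and $c_2$ and one auxiliary clock used as a reference/``tick'' clock to realize exact arithmetic operations and to let \tort check that \ach elapsed precisely the right amount of time.

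The core of the construction is the gadget that performs the doubling/halving of a clock value using time elapse plus the recursion mechanism. Here the pass-by-value/pass-by-reference distinction of RTA is essential: a box that passes some clocks by value and others by reference lets us save the old value of a clock on the stack, manipulate it in the called component, and on return either keep the new value (by reference) or restore the old one (by value). I will build a component, say $\Hh_{\mathrm{copy}}$ or $\Hh_{\mathrm{mult}}$, which on entry with clock $x$ holding some value $v$ returns with another clock holding $2v \bmod 1$ (or $v/2$), using a bounded number of time steps; \tort's states inside this component allow \tort to branch off and verify, via an equality guard such as $x = 1$, that \ach did not cheat, and if \ach did cheat \tort reaches a $\ddot\smile$ (smiley) sink that is a target state, so \ach is forced to play honestly. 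Each instruction module $\Hh_{\mathrm{inc}}$, $\Hh_{\mathrm{dec}}$, $\Hh_{\mathrm{zero}}$ is then assembled from these arithmetic gadgets with entry node $\ell_i$ and exit node(s) $\ell_k$ (and $\ell_m$ for zero-checks, with the zero-check realized by testing whether the encoding clock has its ``extremal'' value), exactly as outlined above; wiring the modules together according to the instruction list of $M$ yields $\Hh_M$, and the whole RTA $\Hh$ has entry the first instruction and the target set $\{\mathrm{HALT}\} \cup \{\ddot\smile\text{-nodes}\}$. The bounded-time budget $T$ is chosen large enough that a halting run of $M$ fits (the number of steps of a halting run is finite, and each simulated step consumes a fixed small amount of time, so pick $T$ accordingly — or, more robustly, use a geometrically shrinking time schedule so that arbitrarily many steps fit in bounded time, which is the standard trick making the bounded-time version work at all).

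The correctness argument then splits into the two usual directions. If $M$ halts, \ach plays the honest simulation: at every module she elapses exactly the delays dictated by the encoding, so whatever verification branch \tort takes leads back into the simulation (never to a smiley), and the run reaches $\ell_n = \mathrm{HALT}$ within time $T$; hence \ach wins. Conversely, if \ach ever deviates — elapses a wrong delay, takes a transition whose guard is only ``approximately'' satisfied, or mishandles the stack — then \tort has a challenge that exposes the discrepancy and routes the token to a $\ddot\smile$ target, so \ach wins only if she simulates faithfully, in which case the run tracks the unique run of $M$ and reaches HALT iff $M$ halts. Therefore \ach has a winning strategy iff $M$ halts, giving undecidability. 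I expect the main obstacle to be the arithmetic gadget: realizing an exact doubling/halving of a bounded-code clock value with only three clocks, while simultaneously giving \tort a sound-and-complete way to check it, is delicate — one must make sure the verification branches do not themselves consume time that breaks the bounded-time accounting, and that no ``sloppy'' \ach strategy can satisfy all of \tort's possible equality checks without actually being honest (the $=$-guards must pin the value down exactly, which is where using a third reference clock reset at the right moments pays off). A secondary subtlety is ensuring the construction is hierarchical, i.e. that the recursion depth is bounded along any play or that components are ordered so no component calls one of equal or higher order; this is arranged by making the arithmetic gadgets call only ``lower'' helper components and never recursing on themselves, so the hierarchical claim in Theorem~\ref{th:main} follows for this case as well.
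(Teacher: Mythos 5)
Your high-level template (reduction from two-counter machines, \ach simulating by choosing delays, \tort verifying via equality-guarded check branches ending in $\ddot\smile$ targets, division/doubling gadgets built from pass-by-value recursion) matches the paper's. But there is a genuine gap in exactly the part that makes this the \emph{time-bounded} result rather than a rehash of the known unbounded one. Your primary suggestion --- encode $c_1$ and $c_2$ in two separate clocks as $1-1/2^{c_i}$ and ``choose $T$ large enough that a halting run fits'' --- does not yield a reduction: the length of the halting run is not computable from $M$, so $T$ cannot be chosen as a function of the input, and with that encoding each simulated instruction costs a non-shrinking amount of time, so a long halting run would overrun any fixed $T$ and \ach would lose even though $M$ halts. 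You do name the geometric-shrinking schedule as the alternative, but you leave it unreconciled with your encoding. The paper's construction ties the two together: both counters go into a \emph{single} clock $x=\frac{1}{2^{k+c}3^{k+d}}$ (G\"odel-style, freeing a clock), a second clock $y=\frac{1}{2^k}$ carries the instruction index $k$ in its exponent, and the third clock $z$ only enforces urgency. Because $k$ appears in every exponent, all relevant clock values halve (at least) at every step, each instruction module is shown to cost less than $2\beta=2/2^k$, and the whole simulation --- halting or not --- fits in total time $<4$. Designing the encoding so that the per-step cost provably shrinks is the heart of the lemma, and your proposal does not supply it.

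A second, smaller but substantive confusion concerns the polarity of the verification gadget. You write that if \ach cheats, \tort reaches a $\ddot\smile$ target ``so \ach is forced to play honestly,'' and later that honest play means \tort's checks ``lead back into the simulation (never to a smiley).'' Both are inverted relative to a working construction: $\ddot\smile$ must be a \emph{target reached precisely when \ach was honest and \tort chose to verify} (the equality guards like $a{=}n$ are satisfiable only for the correct delay), and the verification branches terminate the play rather than rejoin the simulation (they destroy clock values and consume time, e.g.\ the $n$ calls to $M_b$). With your stated polarity, \ach could win by deliberately cheating, and the reduction's correctness argument would collapse. As written, the proposal would need both the encoding/time-accounting and the gadget logic repaired before it establishes the lemma.
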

\begin{proof}
We prove that the reachability problem is undecidable for  unrestricted RTA with 3 clocks.  
In order to obtain the undecidability result, we 
use a reduction from the halting problem 
for two counter machines. 
Our reduction uses a RTA with three clocks $x,y,z$.

We specify a main component for each instruction of the two counter machine. On entry into 
a main component for increment/decrement/zero check, 
we have $x=\frac{1}{2^{k+c}3^{k+d}}$, $y=\frac{1}{2^k}$ and $z=0$,
where $c,d$ are the current values of the counters and $k$ is the current instruction. Note that $z$ is used only to enforce urgency in several locations. 
Given a two counter machine, we build a 3 clock RTA 
whose building blocks are the main components for the instructions.   
The purpose of the components is to simulate faithfully the counter machine 
by choosing appropriate delays  to adjust the variables to reflect changes 
in counter values. 
On entering the entry node $en$ of a main component corresponding to an instruction $l_i$,
we have the configuration $(\langle \epsilon \rangle, en, (\frac{1}{2^{k+c}3^{k+d}},\frac{1}{2^{k}},0))$ 
of the three clock RTA.  

We discuss the module for incrementing counter $c$ here; more details can be found in \cite{KMT14}.
In all the components, the variables passed by value are written below the boxes and the invariants of the 
locations are indicated below them.

\noindent{\bf{Simulate increment instruction}}: Lets consider the increment instruction  
$\ell_i$: $c=c+1$; goto $\ell_k$. The component for this instruction is component $Inc~c$ given in Figure 
\ref{fig_3clk_rtg_inc}. Assume that $x=\frac{1}{2^{k+c}3^{k+d}}$, $y=\frac{1}{2^k}$ and $z=0$ at 
the entry node $en_1$ of the component $Inc~c$. To correctly simulate the increment of counter $c$, the clock values at the exit node $ex_1$ should be $x=\frac{1}{2^{k+c+2}3^{k+d+1}}$, $y=\frac{1}{2^{k+1}}$ and $z=0$. The value of $x$ goes from $x=\frac{1}{2^{k+c}3^{k+d}}$ to $x=\frac{1}{2^{k+c+2}3^{k+d+1}}$ so that $c$ is incremented and end of current $k+1$ instruction is also recorded. Thus $x=\frac{1}{2^{k+1+c+1}3^{k+1+d}}$.

Let $\alpha=\frac{1}{2^{k+c}3^{k+d}}$ and $\beta=\frac{1}{2^k}$.  We want  $x = \frac{\alpha}{12}$ and $y = \frac{\beta}{2}$ at $ex_1$. We utilise the component $Div\{a,n\}$ (instantiating $Div\{a,n\}$ with $a=x,n=12$ to achieve $x = \frac{\alpha}{12}$ and with $a=y,n=2$ to achieve $y=\frac{\beta}{2}$) to perform these divisions. 
Lets walk through the working of the component $Inc~c$. As seen above, at the entry node $en_1$, we have 
$x=\frac{1}{2^{k+c}3^{k+d}}$, $y=\frac{1}{2^k}$ and $z=0$. 

\begin{enumerate}
\item No time is spent at $en_1$ due to the invariant $z=0$. 
$Div\{y,2\}$ is called, passing $x,z$ by value.  
At the call port  of $A_1:Div\{y,2\}$, we have the same values of $x,y,z$. Let us examine the component $Div\{y,2\}$. 
We instantiate $Div\{a,n\}$ with $a=y,n=2$. Thus, the clock referred to as $b$ 
in $Div\{a,n\}$ is $x$ after the instantiation. 
At the entry node $en_2$ of $Div\{y,2\}$, no time is spent due to the invariant $z=0$; we have 
$a=y=\beta,b=x=\alpha,z=0$.  
Resetting $b$(i.e; $x$), we are at the call port of $A_3:D$. 
$A_3$ is called, passing $a,z$ by value. 
A nondeterministic time $t$ is spent at the entry node $en_3$ of $D$. 
 Thus, at the return port of $A_3$, we have $a=y=\beta,b=x=t,z=0$. 
The return port of $A_3$ is a node belonging to \tort; for \ach to 
reach $\ddot \smile$, $t$ must be $\frac{\beta}{2}$. \tort has 
two choices to make at the return port of $A_3$: he can continue the simulation, 
by resetting $a$(i.e; $y$) and going to the call port of $A_5:D$, 
or he can verify if $t$ is indeed $\frac{\beta}{2}$, by going to the call port of $A_4$. 

\begin{itemize}
\item Assume \tort goes to the call port of $A_4: C^{x=}_{y/2}$ (recall, that by the instantiation, $b=x$, $a=y$ and 
$n=2$). $z$ is passed by value. At the entry node $en_5$ of  
$C^{x=}_{y/2}$, no time elapses due to the invariant $z=0$. 
Thus, we have $x=b=t$, $a=y=\beta,z=0$ at $en_5$. 
The component $A_7:M_x$ is 
invoked, passing $x,z$ by value. At the entry node $en_6$ of $M_b$, a time $1-t$ 
is spent, giving $a=y=\beta+1-t$, $b=x=t$ and $z=0$ at the return port of $A_7$. 
Since $n=2$, one more invocation of $A_7: M_x$ is made, obtaining 
$a=y=\beta+2(1-t)$, $b=x=t$ and $z=0$ at the return port 
of $A_7$ after the second invocation. To reach the 
exit node $ex_5$
of $C^{x=}_{y/2}$, $a$ must be exactly 2, since 
no time can be spent at the return port of $A_7$; this is so since the invariant $z=0$ at the exit node $ex_5$ of 
$C^{x=}_{y/2}$ is satisfied only when no time 
is spent at the return port of $A_7$.  
If $a$ is exactly 2, we have 
 $\beta=2t$. In this case, 
from the return port of $A_4$,  
$\ddot \smile$ can be reached.  
 
 \item Now consider the case that \tort moves ahead from 
   the return port of $A_3$, resetting $a$(i.e; $y$) to the call port of $A_5:D$.
 The values are $a=y=0,b=x=t=\frac{\beta}{2}$ and $z=0$. 
   $A_5:D$ is invoked passing $b=x$ and $z$ by value. 
   A non-deterministic amount of time $t'$ is spent at the entry node $en_3$ of 
   $D$, giving $a=y=t'$, $b=x=\frac{\beta}{2}$ and $z=0$ 
   at the return port of $A_5$.  Again, the return port of $A_5$ is a node 
   belonging to \tort. Here \tort, thus has two choices: he can continue 
   with the simulation going to $ex_2$, or can 
   verify that $t'=\frac{\beta}{2}$ by going to the call port of $A_6:C^{y=}_x$.
   $C^{y=}_x$ is a component that checks if $y$ has ``caught up'' with $x$; that is, 
   whether $t'=t=\frac{\beta}{2}$. At the entry node $en_4$ of $C^{y=}_x$,
   $a$ and $b$ can simultaneeously reach 1 iff $t=t'$; that is, $t'=\frac{\beta}{2}$.
   Then, from the return port of $A_6$, we can reach $\ddot \smile$.   
   
  \item Thus, we reach $ex_2$ with $x=y=\frac{\beta}{2},z=0$. At the return port of 
  $A_1:Div\{y,2\}$, we thus have $x=\alpha, y=\frac{\beta}{2},z=0$.       
\end{itemize}
\item From the return port of $A_1:Div\{y,2\}$, we reach the call port of $A_2:Div\{x,12\}$.
$y,z$ are passed by value. The functioning of $A_2$ is similar to that of $A_1$:
at the return port of $A_1$, we obtain $x=\frac{\alpha}{12}$, $y=\frac{\beta}{2}$ and $z=0$. 
\end{enumerate}

\begin{figure}
\begin{center}
\begin{tikzpicture}[node distance=4cm] 
\tikzstyle{every node}=[font=\scriptsize]
\draw(0, 0) rectangle (7,1.5);
\draw (6.5, 1.7) node {$\incc$};

  \node[loc](n1) at (0, 0.7) {$en_{1}$};  
  \node[inv]() [below of =n1, node distance=5mm]{$\figinv{z{=}0}$};
  
  \node[boxloc](a1) at (2, 0.7) {$\begin{array}{c} A_1{:}\Div{y}{2}\end{array}$};
  \node[varpass] () [below of =a1, node distance=5mm]{$(x,z)$};
  \node[port](a1n1) [left of=a1,node distance = 10mm] {}; 
  \node[port](a1x1) [right of=a1,node distance = 10mm] {};

  \node[boxloc](a2) at (5, 0.7) {$\begin{array}{c} A_2{:}\Div{x}{12}\end{array}$};
  \node[varpass] () [below of =a2, node distance=5mm]{$(y,z)$};
  \node[port](a2n1) [left of=a2,node distance = 11mm] {}; 
  \node[port](a2x1) [right of=a2,node distance = 11mm] {};
  
  \node[loc](x1) at (7, 0.7) {$ex_{1}$};
  \node[inv]() [below of =x1, node distance=5mm]{$\figinv{z{=}0}$};
  
  \draw[trans] (n1)--(a1n1);
  \draw[trans] (a1x1) -- (a2n1);
  \draw[trans] (a2x1) -- (x1);

  \draw(0, -3.5) rectangle (6,-0.5);
  \draw (4.5, -0.3) node {$\Div{a}{n}:a,b\in\set{x,y}$};

  \node[loc](n2) at (0, -1.2) {$en_{2}$};  
  \node[inv]() [below of =n2,node distance=5mm] {$\figinv{z{=}0}$};  
  
  \node[boxloc](a3) at (1.5, -1.2) {$\begin{array}{c} A_3{:}\Del\end{array}$};
  \node[varpass] () [below of =a3, node distance=5mm]{$(a,z)$};
  \node[port](a3n1) [left of=a3,node distance = 5.5mm] {}; 
   \node[oport](a3x1) [right of=a3,node distance =5.5mm] {};  
  
  \node[boxloc](a4) at (1.5, -2.3) {$\begin{array}{c} A_4{:}\chkDiv{b}{a}{n}\end{array}$};
  \node[varpass] () [below of =a4, node distance=5mm]{$(z)$};
  \node[port](a4n1) [left of=a4,node distance = 7.2mm] {}; 
  \node[port](a4x1) [right of=a4,node distance = 7.2mm] {}; 
  
  \node[boxloc](a5) at (4.5, -1.2) {$\begin{array}{c} A_5{:}\Del\end{array}$};
  \node[varpass] () [below of =a5, node distance=5mm]{$(b,z)$};
  \node[port](a5n1) [left of=a5,node distance = 5.5mm] {}; 
    \node[oport](a5x1) [right of=a5,node distance =5.5mm] {};  
  
  \node[boxloc](a6) at (4.5, -2.3) {$\begin{array}{c} A_6{:}\chkEQn{a}{b}\end{array}$};
  \node[varpass] () [below of =a6, node distance=5mm]{$(z)$};
  \node[port](a6n1) [left of=a6,node distance = 6.9mm] {}; 
   \node[port](a6x1) [right of=a6,node distance = 6.9mm] {};
   
   \node[loc] (x2) at (6,-1.2) {$ex_2$};
     \node[inv]() [below of =x2,node distance=5mm] {$\figinv{z{=}0}$};  
  
   \node[loc] (x2s) at (3,-3) {$\ddot \smile$};
     \node[inv]() [below of =x2s,node distance=3mm] {$\figinv{z{=}0}$};  
  
   \draw[trans] (n2)--(a3n1)  node [midway, above]{$\set{b}$};
   \draw[otrans] (a3x1)--(2.5,-1.9)--(2,-1.9)--(0.3,-1.9)--(0.3,-2.2)--(a4n1);
   \draw[otrans] (a3x1)--node [midway, above]{\textcolor{black}{$\set{a}$}}(a5n1);
   \draw[otrans] (a5x1)--(5.5,-1.9) -- (3.4,-1.9)--(3.4,-2.2)--(a6n1);
   \draw[otrans] (a5x1)--(x2);
   \draw[trans] (a4x1) -- (2.5,-2.3)-- (2.5,-3) -- (x2s);
   \draw[trans] (a6x1) -- (5.5,-2.3) -- (5.5,-3) -- (x2s);
   

  \draw(7, -1.5) rectangle (9,-0.5);
  \draw (8.7, -0.3) node {$\Del$};
  
  \node[loc] (n3) at (7,-1) {$en_3$};
  \node[loc] (x3) at (9,-1) {$ex_3$};
  
  \draw[trans] (n3)--(x3);


  \draw(7, -3.5) rectangle (9,-2.5);
  \draw (8.7, -2.3) node {$\chkEQn{a}{b}$};
  
  \node[loc] (n4) at (7,-3) {$en_4$};
  \node[loc] (x4) at (9,-3) {$ex_4$};
  
  \draw[trans] (n4)--(x4) node[midway,above]{$a{=}1$} node[midway,below]{$b{=}1$};


  \draw(0, -5.5) rectangle (6,-4);
  \draw (4.7, -3.8) node {$\chkDiv{b}{a}{n}:a,b\in\set{x,y}$};

  \node[loc](n5) at (0, -4.7) {$en_{5}$};  
  \node[inv]() [below of =n5,node distance=5mm] {$\figinv{z{=}0}$};  
  
  \node[boxloc](a7) at (1.5, -4.7) {$\begin{array}{c} A_7{:}M_b\end{array}$};
  \node[varpass] () [below of =a7, node distance=5mm]{$(b,z)$};
  \node[port](a7n1) [left of=a7,node distance = 6.5mm] {}; 
   \node[port](a7x1) [right of=a7,node distance =6.5mm] {};  
  
  \node (a) at (3.7,-4.7) {$\begin{array}{l} \mbox{$n{-}1$ calls to $M_b$} \\ \mbox{$(b,z)$ pass by value} \end{array}$};
  
   \node[loc] (x5) at (6,-4.7) {$ex_5$};
   \node[inv]() [below of =x5,node distance=5mm] {$\figinv{z{=}0}$};  
  
   \draw[trans] (n5)--(a7n1);
   \draw[trans,dashed] (a7x1) -- (x5);
   \draw[trans] (a) -- (x5) node[midway,above] {$a{=}n$};


  \draw(7, -5.5) rectangle (9,-4.5);
  \draw (8.7, -4.3) node {$M_b$};
  
  \node[loc] (n6) at (7,-5) {$en_6$};
  \node[loc] (x6) at (9,-5) {$ex_6$};
  
  \draw[trans] (n6)--(x6) node[midway,above]{$b{=}1$};

\end{tikzpicture}

 \caption{Games on RTA with 3 clocks : Increment $c$.}
 \label{fig_3clk_rtg_inc}
 
\end{center}
\end{figure}
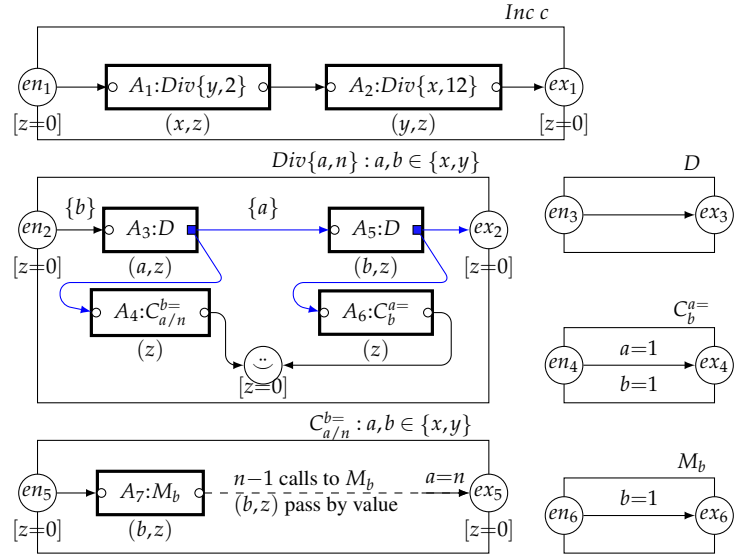

\noindent{\it Time taken}: 
Now we discuss the total time to reach a $\ddot \smile$ node   
or the exit node $ex_1$ of the component $Inc~c$ while simulating the increment instruction. 
 At the entry node $en_1$,
 clock values are $x=\frac{1}{2^{k+c}3^{k+d}}$, $y=\frac{1}{2^k}$ and $z=0$. 
 Let $\alpha = \frac{1}{2^{k+c}3^{k+d}}$ and $\beta = \frac{1}{2^{k}}$. The invaraint $z=0$ at the entry and the exit nodes $en_1$ and $ex_1$ ensures that no time elapses in these nodes and also in the return ports of $A_1$ and $A_2$. 
From the analysis above, it follows that at the return port of $A_1:\Div{y}{2}$,  $x=\alpha$, $y=\frac{\beta}{2}$ and $z=0$. Similarly at the return port of $A_2:Div\{x,12\}$, the clock values are $x=\frac{\alpha}{12}$, $y=\frac{\beta}{2}$ and $z=0$. Thus, counter $c$ has been incremented and the end of instruction $k$ has been recorded in $x$ and $y$. 
The time spent along the path from $en_1$ to $ex_1$ is the sum of times spent in $A_1:Div\{y,2\}$ and 
$A_2:Div\{x,12\}$. 
\begin{itemize}
\item Time spent in $A_1: Div\{y,2\}$. The time spent in $A_3:D$, as 
well as $A_5:D$ is both 
  $\frac{\beta}{2}$. Recall that \tort can verify that the times  $t,t'$ 
  spent in $A_3,A_5$ are both $\frac{\beta}{2}$.
If \tort enters $A_4$ to verify $t=\frac{\beta}{2}$, then the time taken is 
$2(1-t)$. In this case, the time taken to reach $\ddot \smile$  from the return port of $A_4$ is 
$t+ 2(1-t)=2-\frac{\beta}{2}$. Likewise, if \tort continued from $A_3$ to $A_5$, and goes on to verify that 
the time $t'$ spent in $A_5$ is also $\frac{\beta}{2}$, 
then the total time spent before reaching the $\ddot \smile$ from 
the return port of $A_6$ is 
  $t+t'+(1-t')=1+t=1+\frac{\beta}{2}$. Thus, 
if we are back at the return port of $A_1$, the time spent in $A_1$ is $t+t'=\beta$. 

\item Time spent in $A_2: Div\{x,12\}$. Here, the time spent in $A_3:D$ as well as $A_5:D$ is 
$\frac{\alpha}{12}$. In case \tort verifies that the time $t$ spent in $A_3:D$ is indeed $\frac{\alpha}{12}$,
then he invokes $A_4$. The time elapsed in $C^{y=}_{x/12}$ is $12(1-t)=12(1-\frac{\alpha}{12}) < 12$. 
Likewise, if \tort continued from $A_3$ to $A_5$, and goes on to verify that 
the time $t'$ spent in $A_5$ is also $\frac{\alpha}{12}$, 
then the total time spent before reaching the $\ddot \smile$ from 
the return port of $A_6$ is 
  $t+t'+(1-t')=1+t=1+\frac{\alpha}{12}$. Thus, 
if we are back at the return port of $A_2$, the time spent in $A_2$ is $t+t'=\frac{2\alpha}{12}$. 

\item In general, the component $Div\{a,n\}$ divides the value in clock $a$ by $n$. 
 If $a=\zeta$ on entering $\Div{a}{n}$, then upon exit, its value is $a=\frac{\zeta}{n}$.  The time taken to reach the exit $ex_2$ is $2 * (\frac{\zeta}{n})$. The time taken to reach the node $\ddot \smile$ in $\Div{a}{n}$ is $<n$ (due to $n$ calls to $M_b$ component).

\item Total time spent in $Inc~c$. Thus, if we come back to the return port of $A_2$, the total time spent is 
$\beta+\frac{2\alpha}{12} < 2\beta$, on entering with $y=\beta$. Recall that $x=\alpha=\frac{1}{2^{k+c}3^{k+d}}$ and $y=\beta=\frac{1}{2^k}$ and thus $\alpha \leq \beta$ always.
 \end{itemize}

In the zero check instruction, starting with $x=\frac{1}{2^{k+c}3^{k+d}}=\alpha, y=\frac{1}{2^k}=\beta$ and $z=0$, 
we divide $x$ by 6 and $y$ by 2, to record the $(k+1)$th instruction.  
\cite{KMT14} gives the details. As in the case of the increment instruction, we show that 
the main module for simulation of a zero check instruction also takes a time $<2 \beta$, on entering with $y=\beta$. 
The module for  the decrement instruction for counter $c$ only differs from the $Increment~c$ module 
of Figure \ref{fig_3clk_rtg_inc}, in that the call to $\Div{x}{12}$ is replaced by $\Div{x}{3}$ thus updating $x$ from $\frac{1}{2^{k+c}3^{k+d}}$ to $\frac{1}{2^{k+c}3^{k+d+1}}$.  Similar is the case of incrementing and decrementing 
counter $d$.

 We obtain the full RTA simulating the two counter machine 
by connecting the entry and exit of main components of instructions according to the machine's sequence of instructions. 
If the machine halts, then the RTA has an exit node corresponding to $HALT$.
Establising that for the  $k$th instruction, the time elapsed is no more than $2 \beta$, 
 for $\beta=\frac{1}{2^k}$, we have that for the first instruction, the time
 elapsed is at most $2$, for the second instruction it is   $\frac{2}{2}$, for
 the third it is $\frac{2}{2^2}$ and so on.  
 It is straightforward to see that the total time duration is bounded from above
 by  $2 (1 + \frac{1}{2} +\frac{1}{4} + \frac{1}{8} + \frac{1}{16} + \cdots) < 4$.

We now show that the two counter machine halts iff \ach has a strategy to reach $HALT$ or $\ddot \smile$. 
Suppose the machine halts. Then the strategy for \ach is to choose the appropriate delays to update the counters in each main component. Now if \tort does not verify (by entering check components) in any of the main components, then the exit $ex_1$ of the main component is reached. If \tort decides to verify then the node $\ddot \smile$ is reached. Thus, if \ach simulates the machine correctly then either the $HALT$ exit or $\ddot \smile$  is reached  if the machine halts.    
  
Conversely, assume that the two counter machine does not halt. Then we show that
\ach has no strategy to reach either $HALT$ or $\ddot \smile$. Consider a
strategy of \ach which correctly simulates all the instructions. Then $\ddot
\smile$ is reached only if \tort chooses to verify. But if \tort does not choose
to verify then $\ddot \smile$ can not be reached. The simulation continues and
as the machine does not halt, the exit node $HALT$ is never reached. Now,
consider any other strategy of \ach which does an error in simulation (in a hope
to reach $HALT$). \tort could verify this, and in this case, the node  $\ddot
\smile$ will not be reached as the delays are incorrect. Thus \ach can not
ensure reaching $HALT$ or $\ddot \smile$  with a simulation error.  
This is because there exists a strategy of \tort which can check the error
 and \ach thus can not win irrespective of all strategies of \tort.
\end{proof}
\subsection{Time Bounded Reachability Games in RSA}
\label{undec:3ws-GF}

\begin{lemma}
\label{4sw-GF-TB}
The time bounded reachability game problem is undecidable for glitch-free recursive stopwatch automata 
  with at least 4 stopwatches.
\end{lemma}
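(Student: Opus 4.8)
The plan is to convert the $3$-clock RTA constructed in the proof of Lemma~\ref{3clk} into a glitch-free RSA with four stopwatches, keeping the same counter encoding (instruction $k$ and counters $c,d$ recorded by $x=\frac{1}{2^{k+c}3^{k+d}}$, $y=\frac{1}{2^{k}}$, with $z$ enforcing urgency) and the same game, in which \ach simulates the machine and \tort checks the simulation. That automaton fails to be glitch-free only because every box passes a proper, non-empty subset of $\{x,y,z\}$ by value. Inspecting the gadgets (for instance $\Div{a}{n}$ in Figure~\ref{fig_3clk_rtg_inc}), a clock is passed by value precisely when it must keep its value across a call while \emph{another} clock is borrowed, inside the callee, as a scratch timer; and the scratch role always falls to one of the two data clocks $x,y$, there being no clock to spare.

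I would remove both causes of glitches at once. First, add a fourth stopwatch $w$ dedicated to the scratch role: every gadget that used to reset and reuse $x$ or $y$ instead resets and uses $w$, so the data stopwatches are never borrowed. Second, make every box pass \emph{nothing} by value---which is glitch-free by definition, as then $P(b)=\emptyset$ for every box $b$---and recover the pass-by-value effect by \emph{freezing}: a stopwatch given rate $0$ throughout a component stays constant along every run inside it and, being shared, returns to the caller exactly as pass-by-value would leave it. The one stopwatch the callee is meant to change (the variable being divided) is simply left running, and its new value propagates back because it, too, is shared.

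Carrying this out, the data stopwatches $x,y$ have rate $0$ inside every gadget in which they are not the active variable---$x$ is frozen throughout an invocation of $\Div{y}{2}$ and everything it calls, while $y$ is the variable being halved, and symmetrically for $\Div{x}{12}$---the urgency stopwatch $z$ keeps rate $1$ everywhere so that the invariants $z{=}0$ still force urgency, and the pass-by-value of $z$ used in Lemma~\ref{3clk} to clear it after a timed sub-call is replaced by an explicit reset of $z$ on the edge leaving the corresponding return port. The gadgets $\Del$, $M_b$, the checkers, and the two instantiations of $\Div{a}{n}$ are duplicated in the obvious finite way so that ``which stopwatch ticks'' is wired into their flow functions instead of being supplied through the caller's pass-by-value set. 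This keeps the construction hierarchical, which is what Theorem~\ref{th:main} also claims.

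The delicate point---and the step I expect to be the real obstacle---is the \emph{verification} branches. There a checker has to re-create the value $t$ currently held in $w$ after each of its internal calls, a task performed in Lemma~\ref{3clk} by passing the scratch clock to $M_b$ by value; a single dedicated scratch stopwatch cannot both gate such a call and survive it. The fix exploits that a verification branch is a dead end: once \tort enters it the play either reaches $\ddot\smile$ or gets stuck, so the counter values are henceforth irrelevant and one of $x,y$ may be clobbered. On entering the branch one copies $w$ into that stopwatch (one use of $\Del$ with $w$ frozen) and, after each internal call, restores $w$ from this backup, re-establishing $w{=}t$ without any pass-by-value; four stopwatches therefore suffice. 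A verification branch is entered at most once along any play and makes only a constant number of calls, each of duration at most $1$, so the time accounting of Lemma~\ref{3clk} is unaffected: the $k$-th instruction module still runs for time $O(2^{-k})$, the total play duration stays below a constant, and the game is time-bounded. With these changes the correctness argument of Lemma~\ref{3clk}---the two-counter machine halts iff \ach can force a run to $HALT$ or $\ddot\smile$---transfers verbatim; the routine details are carried out in \cite{KMT14}.
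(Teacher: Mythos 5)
Your proposal is correct and matches the paper's construction in all essentials: the fourth stopwatch plays the dedicated scratch role (the paper calls it $u$), pass-by-value is eliminated entirely and replaced by freezing (rate~$0$) the stopwatches that must survive a call, and the verification branches clobber one of the data stopwatches $x,y$ as a backup to restore the scratch value across internal calls, which is harmless because those branches are dead ends terminating in $\ddot\smile$. The time accounting likewise carries over exactly as you describe, so no further comparison is needed.
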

\begin{proof}
We outline quickly the changes as compared to Lemma \ref{3clk} for the case of the increment instruction. 
 Figure \ref{fig_3sw_rsg_inc} gives the component for incrementing 
counter $c$. There are 4 stopwatches $x,y,z,u$. The encoding 
of the counters in the variables is similar to Lemma \ref{3clk}: at the entry 
node of each main component simulating the $k$th instruction, we have 
$x=\frac{1}{2^{c+k}3^{d+k}}=\alpha$, $y=\frac{1}{2^k}=\beta$ and $z=0$, where $c,d$ are the current values of the counters. We use the extra stopwatch $u$ for rough work and hence we do not ensure that $u=0$ when a component is entered.

\begin{figure}
\begin{center}
 \begin{tikzpicture}[node distance=4cm] 
\tikzstyle{every node}=[font=\scriptsize]
\draw(0, 0) rectangle (7,1.5);
\draw (6.5, 1.7) node {$\incc$};

  \node[loc](n1) at (0, 0.7) {$en_{1}$};  
  \node[inv]() [below of =n1, node distance=5mm]{$\figinv{z{=}0}$};
  \node[rate]() [above of =n1,node distance=5mm] {$\figrate{z}$};  
  
  \node[boxloc](a1) at (2, 0.7) {$\begin{array}{c} A_1{:}\Div{y}{2}\end{array}$};
  \node[port](a1n1) [left of=a1,node distance = 10mm] {}; 
  \node[port](a1x1) [right of=a1,node distance = 10mm] {};
  \node[rate]() [above of =a1x1,node distance=5mm] {$\figrate{z}$};  
  
  \node[boxloc](a2) at (5, 0.7) {$\begin{array}{c} A_2{:}\Div{x}{12}\end{array}$};
  \node[port](a2n1) [left of=a2,node distance = 11mm] {}; 
  \node[port](a2x1) [right of=a2,node distance = 11mm] {};
  \node[rate]() [above of =a2x1,node distance=5mm] {$\figrate{z}$};  
  
  \node[loc](x1) at (7, 0.7) {$ex_{1}$};
  \node[inv]() [below of =x1, node distance=5mm]{$\figinv{z{=}0}$};
  
  \draw[trans] (n1)--(a1n1);
  \draw[trans] (a1x1) -- (a2n1);
  \draw[trans] (a2x1) -- (x1);

  \draw(0, -3.5) rectangle (6,-0.5);
  \draw (4.5, -0.3) node {$\Div{a}{n}:a\in\set{x,y}$};

  \node[loc](n2) at (0, -1.2) {$en_{2}$};  
  \node[inv]() [below of =n2,node distance=5mm] {$\figinv{z{=}0}$};  
  \node[rate]() [above of =n2,node distance=5mm] {$\figrate{z}$};  
  
  \node[loc](l1) at (1, -1.2) {$l_1$};    
  \node[rate]() [above of = l1, node distance=5mm] {$\figrate{u}$};
  
  \node[oloc](l2) at (2, -1.2) {$l_2$}; 
  \node[rate]() [above of =l2,node distance=5mm] {$\figrate{z}$};  
  
  \node[boxloc](a3) at (1.5, -2.3) {$\begin{array}{c} A_3{:}\chkDiv{u}{a}{n}\end{array}$};

  \node[port](a3n1) [left of=a3,node distance = 7.2mm] {}; 
  \node[port](a3x1) [right of=a3,node distance = 7.2mm] {}; 
  \node[rate]() [above of =a3x1,node distance=5mm] {$\figrate{z}$};  
  
  \node[loc](l3) at (4, -1.2) {$l_3$};  
  \node[rate]() [above of =l3,node distance=5mm] {$\figrate{a}$};  
  
  \node[oloc](l4) at (5, -1.2) {$l_4$}; 
  \node[rate]() [above of =l4,node distance=5mm] {$\figrate{z}$};  
  
  \node[boxloc](a4) at (4.5, -2.3) {$\begin{array}{c} A_4{:}\chkEQn{a}{u}\end{array}$};
  \node[port](a4n1) [left of=a4,node distance = 6.9mm] {}; 
   \node[port](a4x1) [right of=a4,node distance = 6.9mm] {};
   \node[rate]() [above of =a4x1,node distance=5mm] {$\figrate{z}$};  
   
   \node[loc] (x2) at (6,-1.2) {$ex_2$};
     \node[inv]() [below of =x2,node distance=5mm] {$\figinv{z{=}0}$};  
  
   \node[loc] (x2s) at (3,-3) {$\ddot \smile$};
     \node[inv]() [below of =x2s,node distance=3mm] {$\figinv{z{=}0}$};  
     \node[rate]() [above of =x2s,node distance=5mm] {$\figrate{z}$};  
  
   \draw[trans] (n2)--(l1)  node [midway, above]{$\set{u}$};
   \draw[trans] (l1) -- (l2);
   \draw[otrans] (l2)--(2,-1.9)--(0.3,-1.9)--(0.3,-2.2)--(a3n1);
   \draw[otrans] (l2)--node [midway, above]{\textcolor{black}{$\set{a}$}}(l3);
   \draw[trans] (l3) -- (l4);
   \draw[otrans] (l4)--(5,-1.9) -- (3.4,-1.9)--(3.4,-2.2)--(a4n1);
   \draw[otrans] (l4)--(x2);
   \draw[trans] (a3x1) -- (2.5,-2.3)-- (2.5,-3) -- (x2s);
   \draw[trans] (a4x1) -- (5.5,-2.3) -- (5.5,-3) -- (x2s);


  \draw(7, -3) rectangle (9,-1.5);
  \draw (8.7, -1.3) node {$\chkEQn{a}{u}$};
  
  \node[loc] (n4) at (7,-2.2) {$en_4$};
  \node[rate]() [above of =n4,node distance=5mm] {$\figrate{a,u}$};

  \node[loc] (x4) at (9,-2.2) {$ex_4$};
  
  \draw[trans] (n4)--(x4) node[midway,above]{$a{=}1$} node[midway,below]{$u{=}1$};


  \draw(0, -5.5) rectangle (6,-4);
  \draw (4.7, -3.8) node {$\chkDiv{u}{a}{n}:a,b\in\set{x,y}$};

  \node[loc](n3) at (0, -4.7) {$en_{3}$};  
  \node[inv]() [below of =n3,node distance=5mm] {$\figinv{z{=}0}$};  
  \node[rate]() [above of =n3,node distance=5mm] {$\figrate{z}$};    
  
  \node[boxloc](a5) at (1.5, -4.7) {$\begin{array}{c} A_5{:}M_u\end{array}$};
  \node[port](a5n1) [left of=a5,node distance = 6.5mm] {}; 
   \node[port](a5x1) [right of=a5,node distance =6.5mm] {};  
    \node[rate]() [above of =a5x1,node distance=5mm] {$\figrate{z}$};    
  
  \node (a) at (3.7,-4.7) {$\begin{array}{l} \mbox{$n{-}1$ calls to $M_u$} \\ \mbox{   } \end{array}$};
  
   \node[loc] (x3) at (6,-4.7) {$ex_3$};
   \node[inv]() [below of =x3,node distance=5mm] {$\figinv{z{=}0}$};  
  
   \draw[trans] (n3)--(a5n1)node[midway,above] {$\set{b}$};
   \draw[trans,dashed] (a5x1) -- (x3);
   \draw[trans] (a) -- (x3) node[midway,above] {$a{=}n$};


  \draw(7, -5.5) rectangle (10,-4);
  \draw (8.7, -3.7) node {$M_u:a,b\in\set{x,y}$};
  
  \node[loc] (n5) at (7,-4.7) {$en_5$};
  \node[rate]() [above of =n5,node distance=5mm] {$\figrate{a,b,u}$};  
  
  \node[loc] (l) at ( 8.5,-4.7) {$l$};
  \node[rate]() [above of =l,node distance=5mm] {$\figrate{b,u}$};  
  
  \node[loc] (x5) at (10,-4.7) {$ex_5$};
  
  \draw[trans] (n5)--(l) node[midway,above]{$u{=}1$}node[midway,below]{$\set{u}$};
  \draw[trans] (l)--(x5) node[midway,above]{$b{=}1$}node[midway,below]{$\set{b}$};

\end{tikzpicture}

 \caption{Games on Glitchfree-RSA with 4 stopwatches : Increment $c$. Note that the variables that tick in a location are indicated above it. Due to semantics of RSA, no time elapses in the call ports and exit nodes and hence variable-ticking is not mentioned for these locations.}
 \label{fig_3sw_rsg_inc}
 
\end{center}
\end{figure}
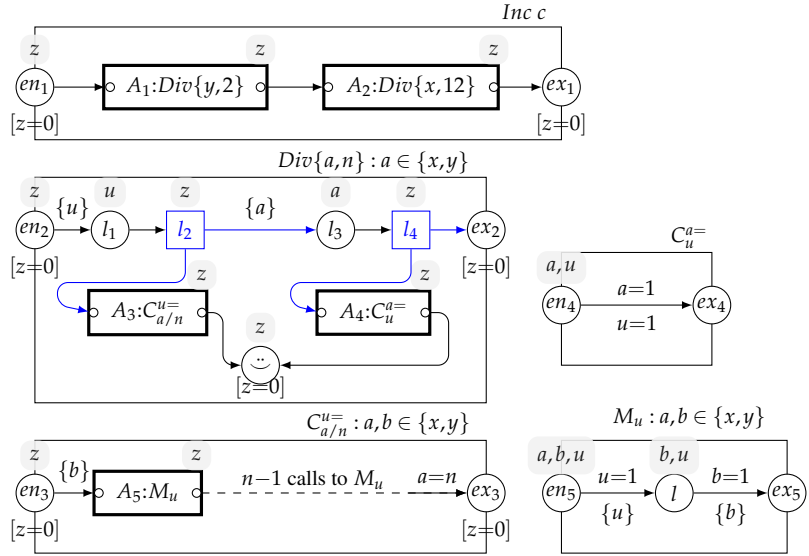

As was the case in Lemma \ref{3clk}, simulation of the $(k+1)$th instruction, incrementing $c$ amounts to 
dividing $y$ by 2 and $x$ by 12. 
 In Lemma \ref{3clk}, it was possible to pass some clocks by value, and some by reference, but 
 here, all variables must be either passed by value or by reference. 
  The $Div\{a,n\}$ module here is similar to that in Lemma \ref{3clk}: 
  the box $[A_3:D]$ in Figure \ref{fig_3clk_rtg_inc} is replaced by the node $l_1$, where only $u$ ticks and accumulates a time $t$. (Recall that $u$ is the stopwatch used for rough work and has no bearing on the encoding.) In node $l_2$, 
  only $z$ ticks.
  $l_2$ is a node belonging to \tort. 
  The time $t$ spent at $l_1$ must be exactly $t=\frac{\beta}{2}$, where $\beta=\frac{1}{2^k}$ 
  is the value of $a=y$ on entering $Div\{y,2\}$. When $t=\frac{\beta}{2}$,
  \ach can reach $\ddot \smile$ even when 
  \tort    enters the check module $C^{u=}_{y/2}$.  
    Again, note that 
  the module $C^{u=}_{y/2}$ is similar to the one in Figure  \ref{fig_3clk_rtg_inc}. We use the clock $x$ 
  (instantiating $b=x$) for rough work in this component. Due to this, the earlier value of $x$ is lost. However, this does not affect the machine simulation as we reach the node $\ddot \smile$, and the simulation does not continue.
   $C^{u=}_{y/2}$ calls the component $M_u$: 
   at the entry node $en_5$ (of $M_u$), we have $b=x=0$, $u=t$ and $a=y=\beta$. $a,b,u$ tick at $en_5$. 
   A time $1-t$ is spent at $en_5$, obtaining $b=1-t,u=0,a=\beta+(1-t)$ at $l$. 
   At $l$, only $b,u$ tick obtaining $b=0,u=t,a=\beta+(1-t)$ at $ex_5$. 
   A second invocation of 
    $C^{u=}_{y/2}$ gives $b=0,u=t,a=\beta+2(1-t)$.
    To reach $ex_3$, $a$ must be exactly 2; we thus need $t=\frac{\beta}{2}$. The time elapsed 
    in one invocation of $M_u$ is 
1 time unit; thus a total of 2+t time units is elapsed  before reaching $\ddot \smile$ (via module $C^{u=}_{y/2}$in $\Div{a}{n}$).
   If \tort skips the check at 
    $l_2$ and proceeds to $l_3$ resetting $a$(i.e; $y$), we
      have at $l_3$, $z=0,u=t=\frac{\beta}{2}$ and $a=0$.
      Only $a$ ticks at $l_3$, $a$ is supposed to ``catch up'' with $u$ at $l_3$, by elapsing $t=\frac{\beta}{2}$ 
      in $l_3$. Again, at $l_4$, only $z$ ticks. \tort can verify whether $a=u$ 
      by going to $C_{u}^{a=}$. The component 
$C_{u}^{a=}$ is exactly same as that in Figure 
\ref{fig_3clk_rtg_inc}. 
A time of $1-t$ is elapsed in 
$C_{u}^{a=}$. Thus, the time taken to reach $\ddot \smile$ 
from $C_{u}^{a=}$ is $t+t+1-t=1+t$. Thus, 
the exit node $ex_2$ of $Div\{a,n\}$ is reached in time $2t=2\frac{\beta}{2}=\beta$. 
As was the case in Lemma \ref{3clk}, the time taken to reach the exit node 
of $Inc~c$, starting with $y=\beta,x=\alpha,z=0$ is 
$\beta+ 2 \frac{\alpha}{12} < 2\beta$. Also, the time taken by $Div\{a,n\}$ 
on entering with $a=\zeta$ is $2\frac{\zeta}{n}$.

To summarize, the time taken to reach the exit node of the $Inc~c$ 
component is $< 2 \beta$, on entering with $y=\beta$. 
Also,  the component $Div\{a,n\}$ divides the value in clock $a$ by $n$. 
 If $a=\zeta$ on entering $\Div{a}{n}$, then upon exit, its value is $a=\frac{\zeta}{n}$.  The time taken to reach the exit $ex_2$ is $2 * (\frac{\zeta}{n})$. The time taken to reach the node $\ddot \smile$ in $\Div{a}{n}$ is $< n+1$ (due to $n$ calls to $M_u$ component). 

The component for zero check instruction, as in Lemma \ref{3clk}, divides $x$ by 6 and  
$y$ by 2; similarly, the component for decrement $c$ instruction 
divides  $y$ by 2 and $x$ by 3. The time to reach the exit node of any component corresponding to an instruction   
 is $<2 \beta$, on enetering the component with $y=\beta$. The total time taken
  for the simulation of the two counter machine here also, is $<4$. 
    Details can be found in \cite{KMT14}. 
\end{proof}

\noindent
The following lemma is an easy corollary of Lemma~\ref{3clk}.
\begin{lemma}
 The time bounded reachability game problem is undecidable for unrestricted
 recursive stopwatch automata with at least 3 stopwatches.  
\end{lemma}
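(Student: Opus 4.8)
The plan is to obtain the result as a direct consequence of Lemma~\ref{3clk}, exploiting the fact that recursive timed automata are a syntactic subclass of recursive stopwatch automata. First I would recall from the definitions that a recursive timed automaton is precisely a recursive hybrid automaton in which every variable has flow $1$ in every location, hence in particular a recursive stopwatch automaton all of whose stopwatches happen to be clocks. Consequently the $3$-clock RTA $\Hh$ built in the proof of Lemma~\ref{3clk} --- together with its initial node, its initial valuation, its target set (the node $HALT$ and all nodes labelled $\ddot\smile$), and the time bound $T=4$ established there --- is already an (unrestricted) $3$-stopwatch RSA, and the time-bounded reachability game on it is literally the same game; so $\ach$ has a strategy to win within time $4$ iff the encoded two-counter machine halts, which is undecidable. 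Note that the reduction in Lemma~\ref{3clk} passes some variables by value and others by reference, so the resulting RSA is genuinely unrestricted rather than glitch-free, matching the statement.

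Second, to pass from ``exactly $3$'' to ``at least $3$'' stopwatches I would pad $\variables$ with the required number of fresh variables that tick at rate $1$ in every location and never appear in any invariant, enabledness guard, reset set, or pass-by-value set; such variables leave both the set of runs and their durations unchanged once the new coordinates are projected away, so the reduction is unaffected. Finally I would check that hierarchy is preserved: the RHA of Lemma~\ref{3clk} is hierarchical --- the instruction modules sit above the division gadget $Div\{a,n\}$, which sits above $D$ and the check gadgets, which in turn sit above $M_b$, and no component ever invokes a component of equal or higher order --- and adjoining idle stopwatches does not disturb this ordering. The only point needing a line of care is that the time-bound argument of Lemma~\ref{3clk} depends solely on the flows of $x,y,z$, which are identical in the stopwatch reading, so the bound $T=4$ carries over verbatim; beyond this bookkeeping there is essentially no obstacle.
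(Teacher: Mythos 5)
Your proposal is correct and matches the paper's intent exactly: the paper simply remarks that this lemma is an easy corollary of Lemma~\ref{3clk}, the point being that every clock is a stopwatch, so the $3$-clock unrestricted RTA built there is already an unrestricted $3$-stopwatch RSA and the whole reduction (target set, time bound, and all) carries over verbatim. Your additional remarks on padding with idle stopwatches and preserving the hierarchical structure are harmless elaborations of the same argument.
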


%
%
%

\section{Conclusion}
The main result of this paper is that time-bounded reachability game problem for
recursive timed automata is undecidable for automata with three or more clocks. 
We also showed that for recursive stopwatch automata the reachability problem turns
undecidable even for glitch-free variant with $4$ stopwatches, and the
corresponding time-bounded problem is undecidable for automata with $3$
stopwatches.  
The decidability of time-bounded reachability game for recursive timed automata
with $2$ clocks is an open problem.

\bibliographystyle{eptcs}
\bibliography{papers}


\end{document}